\def\BibTeX{{\rm B\kern-.05em{\sc i\kern-.025em b}\kern-.08em
    T\kern-.1667em\lower.7ex\hbox{E}\kern-.125emX}}
\newcommand\numberthis{\addtocounter{equation}{1}\tag{\theequation}}
\def\BState{\State\hskip-\ALG@thistlm}
\algnewcommand\algorithmicforeach{\textbf{for each}}
\begin{document}

\title{\huge Towards Federated Learning in UAV-Enabled Internet of Vehicles: A Multi-Dimensional Contract-Matching Approach \\
}

\author{
    Wei Yang Bryan Lim\thanks{WYB.~Lim is with Alibaba Group and Alibaba-NTU Joint Research Institute, Nanyang Technological University, Singapore. }, 
    Jianqiang Huang,\thanks{JQ.~Huang and XS.~Hua are with Alibaba Group. }
    Zehui Xiong\thanks{Z.~Xiong is with Alibaba-NTU Joint Research Institute, and also with School of Computer Science and Engineering, Nanyang Technological University, Singapore. }, 
      Jiawen Kang,\thanks{J.~Kang and D.~Niyato are with School of Computer Science and Engineering, Nanyang Technological University, Singapore. }
    Dusit Niyato,~\textit{IEEE Fellow}, 
    Xian-Sheng~Hua,~\textit{IEEE Fellow}, 
    Cyril Leung,\thanks{C. Leung is with The University of British Columbia and Joint NTU-UBC Research Centre of Excellence in Active Living for the Elderly (LILY).}
    Chunyan Miao \thanks{C.~Miao is with Joint NTU-UBC Research Centre of Excellence in Active Living for the Elderly (LILY) and School of Computer Science and Engineering, Nanyang Technological University, Singapore.}
    
   }


\maketitle

\begin{abstract}

Coupled with the rise of Deep Learning, the wealth of data and enhanced computation capabilities of Internet of Vehicles (IoV) components enable effective Artificial Intelligence (AI) based models to be built. Beyond ground data sources, Unmanned Aerial Vehicles (UAVs) based service providers for data collection and AI model training, i.e., Drones-as-a-Service, is increasingly popular in recent years. However, the stringent regulations governing data privacy potentially impedes data sharing across independently owned UAVs. To this end, we propose the adoption of a Federated Learning (FL) based approach to enable privacy-preserving collaborative Machine Learning across a federation of independent DaaS providers for the development of IoV applications, e.g., for traffic prediction and car park occupancy management. Given the information asymmetry and incentive mismatches between the UAVs and model owners, we leverage on the self-revealing properties of a multi-dimensional contract to ensure truthful reporting of the UAV types, while accounting for the multiple sources of heterogeneity, e.g., in sensing, computation, and transmission costs. Then, we adopt the Gale-Shapley algorithm to match the lowest cost UAV to each subregion. The simulation results validate the incentive compatibility of our contract design, and shows the efficiency of our matching, thus guaranteeing profit maximization for the model owner amid information asymmetry.
\end{abstract}

\begin{IEEEkeywords}
Federated Learning, Incentive Mechanism, Unmanned Aerial Vehicles, Contract theory, Matching
\end{IEEEkeywords}

\newtheorem{definition}{Definition}
\newtheorem{lemma}{Lemma}
\newtheorem{theorem}{Theorem}

\newtheorem{property}{Property}

\section{Introduction}


Following the advancements in the Internet of Things (IoT) and edge computing paradigm, traditional Vehicular Ad-Hoc Networks (VANETs) that focus mainly on Vehicle-to-Vehicle (V2V) and Vehicle-to-Infrastructure (V2I) communications \cite{li2007routing,hartenstein2008tutorial} are gradually evolving into the Internet of Vehicles (IoV) paradigm \cite{xu2017internet,wan2016mobile}.

The IoV is an open and integrated network system which leverages on the enhanced sensing, communication, and computation capabilities of its component data sources, e.g., vehicular sensors, IoT devices, and Roadside Units (RSUs) \cite{yang2014overview}, to build data-driven applications for Intelligent Transport Systems, e.g., for traffic prediction \cite{wang2018internet}, traffic management \cite{kumar2018ant}, route planning \cite{florian2014privacy}, and other smart city applications \cite{ang2018deployment}. Coupled with the rise of Deep Learning, the wealth of data and enhanced computation capabilities of IoV components enable effective Artificial Intelligence (AI) based models to be built. 

Beyond ground data sources, aerial platforms are increasingly important today given that modern day traffic networks have grown in complexity. In particular, Unmanned Aerial Vehicles (UAVs) are  commonly used today to provide data collection and computation offloading support in the IoV paradigm. The UAVs feature the benefits of high mobility, flexible deployment, cost effectiveness \cite{zhou2014efficient}, and can also provide a more comprehensive coverage as compared to ground users. UAVs can be deployed, e.g., to capture images of car parks for the management and analysis of parking occupancy \cite{zhou2017car}, to capture images of roads and highways for traffic monitoring applications \cite{elloumi2018monitoring,coifman2006roadway,ke2018real}, and also to aggregate data from stationary vehicles and roadside units that in turn collect data of other passing vehicles periodically \cite{binol2018time}. Apart from data collection, the UAVs have also been used to provide computation offloading support for resource constrained IoV components \cite{zhang2018energy,bekkouche2018uavs}.

As such, studies proposing the Internet of Drones (IoD) and Drones-as-a-Service (DaaS) \cite{gharibi2016internet,koubaa2018dronetrack,koubaa2017service} have gained traction recently. Moreover, the DaaS industry is a rapidly growing one \cite{walia2019global} that comprises independent drone owners which provide on-demand data collection and model training for businesses and city planners.

Naturally, to build a better inference model, the independently owned UAV companies can collaborate by sharing their data collected from various sources, e.g., carparks, RSUs, and highways, for collaborative model training. However, in recent years, the regulations governing data privacy, e.g., General Data Protection Regulation (GDPR) are increasingly stringent. As such, this can potentially prevent the sharing of data across DaaS providers. To this end, we propose the adoption of a Federated Learning (FL) based \cite{mcmahan2016communication} approach to enable privacy-preserving collaborative ML across a federation of independent DaaS providers. 

In our system model (Fig. \ref{fig:sysmodel}), a client, hereinafter model owner, is interested in collecting data from a region for model training, e.g., for traffic prediction. Given the energy constraints of UAVs \cite{zeng2017energy}, the region is further divided into smaller subregions. The model owner then announces an FL task, e.g., the capturing of real-time traffic flow over highways or the collection of data from RSUs for model training \cite{ke2018real}. Then, only the DaaS providers, hereinafter UAVs, that are able to complete the task within the stipulated time and energy constraints respond to the model owner. Thereafter, the model owner assigns an optimal UAV to each subregion. After the UAV collects the sensing data, model training takes place on each UAV separately, following which only the updated model parameters are transmitted to the model owner for global aggregation.

\begin{figure}
	\includegraphics[width=\columnwidth]{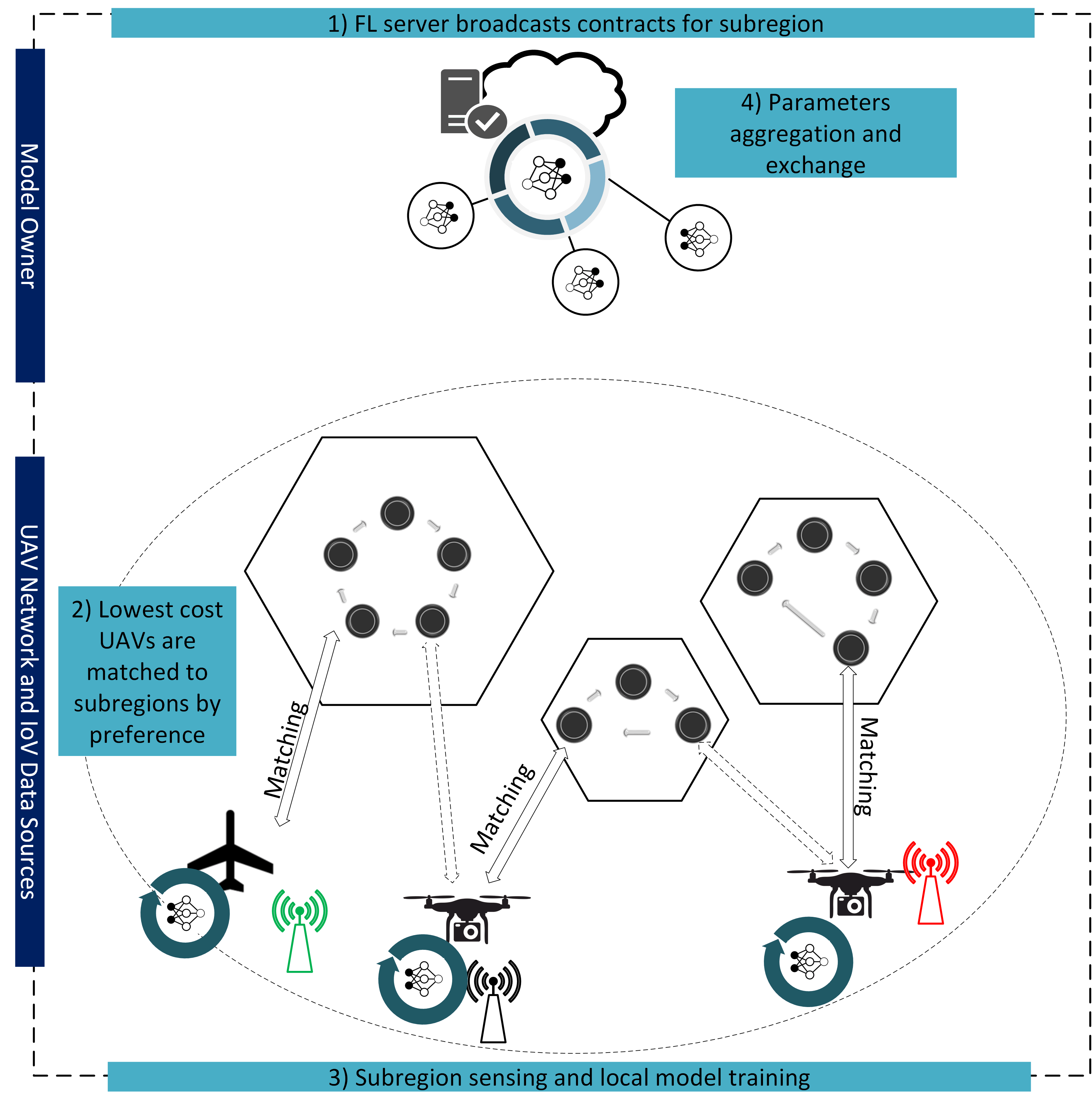}\par 
	\caption{Our proposed system model involving UAV-subregion contract-matching, and FL based collaborative learning within a federation of multiple UAVs. Note that each hexagon indicates a subregion, and within the subregion are nodes, e.g., RSUs, to visit as stipulated by the model owner.}
	\label{fig:sysmodel}
\end{figure}

Our proposed approach has three advantages. Firstly, the resource constrained IoV components are aided by the UAV deployment for completion of time sensitive sensing and model training tasks. Secondly, it preserves the privacy of the UAV-collected data through eliminating the need of data sharing across UAVs. Thirdly, it is communication efficient. The reason is that traditional methods of data sharing will require the raw data to be uploaded to an aggregating cloud server. With FL, only the model parameters need to be transmitted by the UAVs.

However, there exists an incentive mismatch between the model owner and the UAVs. On one hand, the model owners aim to maximize their profits by selecting the optimal UAVs which can complete the stipulated task at the lowest cost, e.g., in terms of sensing, transmission, and computation costs. On the other hand, the UAVs can take advantage of the information asymmetry and misreport their types so as to seek higher compensation. To that end, we leverage on the self-revealing properties of contract theory \cite{bolton2005contract} as an incentive mechanism design to appropriately reward the UAVs based on their actual types. In particular, given the complexity of the sensing and collaborative learning task, we consider a multi-dimensional contract to account for the multi-dimensional sources of heterogeneity in terms of UAV sensing, learning, and transmission capabilities.

After deriving optimal contracts to which the UAVs respond to, the possibility that multiple UAVs prefer a particular subregion still remains. To that end, we leverage on the Gale-Shapley (GS) \cite{dubins1981machiavelli} matching-based algorithm to assign the optimal UAVs to each subregion.

The contribution of this paper is as follows:

\begin{itemize}

\item We propose an FL based sensing and collaborative learning scheme in which UAVs collect the data and participate in privacy-preserving collaborative model training for applications in the IoV paradigm towards the development of an Intelligent Transport System.

\item In consideration of the incentive mismatches and information asymmetry between the UAVs and model owner, we propose a multi-dimensional contract-matching based incentive mechanism design that aims to leverage on the self-revealing properties of an optimal contract, such that the most optimal UAV can be matched to a subregion. 

\item Our incentive mechanism design considers a general UAV sensing, computation, and transmission model, and thus can be extended to specific FL based applications in the IoV paradigm.

\end{itemize}

The organization of this paper is as follows. Section \ref{sec:rw} reviews the related works, Section \ref{sec:sys} introduces the system model and problem formulation, Section \ref{sec:contract} discusses the multi-dimensional contract formulation, Section \ref{sec:matching} considers a matching-based UAV-subregion assignment, Section \ref{sec:pe} presents the performance evaluation of our proposed incentive mechanism design, and Section \ref{sec:conclude} concludes.
 
\section{Related Work}
\label{sec:rw}

In recent years, given the rising popularity of UAVs, there is an increasing number of UAV-related studies in the literature. One group of studies focus on the fundamental issues related to the challenges of UAV deployment, e.g., trajectory optimization \cite{zeng2017energy,oleynikova2016continuous,zhang2018cellular}, communication constraints \cite{abdulla2014optimal,alemayehu2017efficient}, as well as the efficient assignment and deployment of UAVs \cite{boccardo2015uav,brust2015networked}. Another group of studies propose specific applications of UAVs, e.g., as flying base stations \cite{alzenad20173}, with mobile cloudlets for computation offloading \cite{jeong2017mobile}, and for search and rescue missions \cite{scherer2015autonomous}. In particular, the UAVs are also increasingly considered for providing sensing services, i.e., data collection,  \cite{zhou2018mobile} and for the development of IoV related applications, e.g., for traffic prediction \cite{elloumi2018monitoring}, localization of ground vehicles \cite{liu2018uav}, and to facilitate vehicular communications \cite{zhuang2019sdn}.

The market of UAVs as service providers, e.g., in on-demand data collection, is a rapidly growing one \cite{walia2019global}. Given the heterogeneity in UAV types, e.g., in energy constraints and computation capabilities, the incentive mechanism design for UAV systems is an important issue. The study in \cite{ma2019strategic} adopts a game theoretic approach to analyze the offloading decisions of UAVs acting as flying cloudlets for IoT devices. In contrast, the study in \cite{zhang2018predictive} proposes the contract-theoretic approach to incentivize UAV base stations to contribute higher transmit power for enhanced coverage over wireless networks. In consideration of the limited availability of mobile charging stations for UAVs, the study in \cite{shin2019auction} proposes an auction-based approach to efficiently assign the UAVs to specific charging time slots so as to reduce congestion. 

However, given the  nascent field of FL, there are relatively few works that propose FL based collaborative learning schemes involving UAVs. To the best of our knowledge, the study of \cite{zeng2020federated} is the first to propose the implementation of FL for joint power allocation and scheduling of UAV swarms. With the increasingly stringent regulations related to data privacy, the adoption of FL can facilitate collaborative learning for the development of effective AI models, without the exchange of potentially sensitive raw data. As such, there is an urgent need to consider the incentive mechanism design for FL in UAV networks. 

To that end, we can take reference from the growing literature related to incentive mechanism design for FL. For example, the study in \cite{kang2019incentive} adopts a contract-theoretic approach to motivate workers to contribute more computation resource for efficient FL. On the other hand, the study in \cite{feng2019joint} formulates the Stackelberg game to analyze the inefficiency in model update transfer. As an extension, the study in \cite{zhan2020learning} uses a Stackelberg game formulation together with Deep Reinforcement Learning to design a learning-based incentive mechanism for FL. For a comprehensive survey in this area, we refer the readers to \cite{lim2019federated}. 

Apart from the traditional considerations of incentive design in FL, the UAV systems involve other sources of heterogeneity in UAV types, e.g., traversal costs. As such, the multi-dimensional sources of heterogeneity in UAVs have inspired us to adopt the multi-dimensional contract theoretic approach \cite{wang2019multi} in our incentive mechanism design. Moreover, in contrast to traditional works in contract theoretic mechanism design, our system model only involves the matching of a single, optimal UAV type to each subregion. This necessitates the use of the matching-based algorithm such as the GS algorithm. The use of matching for UAVs to subregions have also been studied in \cite{zhou2018mobile}. However, \cite{zhou2018mobile}  does not have any mechanism in place to ensure truthful reporting, while we leverage on the self-revealing properties of contract theory to that end. While the study of contract-matching has also been explored for resource allocation in vehicular fog computing \cite{zhou2019computation}, the contract considered is single-dimensional with simpler considerations. 

In summary, our study considers the adoption of FL to facilitate privacy preserving sensing and collaborative learning in the UAV services market, and proposes a multi-dimensional contract-matching design that aims to match the most optimal UAV to each sensing subregion, while accounting for the multiple sources of heterogeneity in UAV types.

\section{System Model and Problem Formulation}
\label{sec:sys}

We consider a network in which a model owner aims to collect data from stipulated nodes, e.g., from RSUs or images of segments in the highway, in a target sensing region to fulfill a time-sensitive task. One UAV is selected by the task publisher to cover each of the  subregions. Given information asymmetry and the multiple sources of heterogeneity in UAV cost types, the model owner leverages on the self-revealing properties of a multi-dimensional contract theoretic approach to choose one UAV suited to cover each of the subregion. After data collection, the UAV returns to their respective UAV bases for Federated Learning (FL) based model training.

Following \cite{zhou2018mobile}, the target sensing region can be modeled as a graph and divided into $N$ smaller graphs, i.e., subregions whose set is denoted $\mathcal{N} = \{1, \ldots, n, \ldots, N\}$, e.g., through the multilevel graph partition algorithm \cite{karypis1995multilevel}. The set of nodes in subregion $n$ is denoted $\mathcal{I}_n= \{I_1, \ldots, I_n, \ldots, I_N \}$ with the node $i$ in subregion $n$ located at $\bm{x}^n_i\in \mathbb{R}^3$. The Euclidean distance between two nodes $i$ and $i'$ located within subregion $n$, $\forall i, i' \in \mathcal{I}_n, i \neq i'$ is expressed as $l^n_{i,i'}$ where $l^n_{i,i'}=\vert\vert \bm{x}^n_i -\bm{x}^n_{i'}\vert\vert < \infty$, i.e., all nodes are inter-accessible. 


A set $\mathcal{J}= \{1, \ldots, j, \ldots, J\}$ of $J$ unmanned aerial vehicles (UAVs) are located at bases situated around the target sensing region. Without loss of generality, we assume that each base owns a single UAV and  $J\geq N$. Moreover, our model can be easily extended to scenarios in which a UAV swarm is required for sensing in each subregion. Denote $C_j = \{C_1, \ldots, C_j, \ldots, C_J\}$ as the set of bases where $C_j$ refers to the base of UAV $j$ located at $\bm{y}_{C_j}\in \mathbb{R}^3$. The Euclidean distance between the base of UAV $j$ and subregion $n$ is expressed as $l^n_{C_j}$, where $l^n_{C_j} =\vert\vert \bm{y}_{C_j} -\bm{x}^n_{\bar{i}}\vert\vert  < \infty$ and $\bar{i}$ denotes the centre of the subregion. 


There are two stages in our system model as follows:

\begin{enumerate}

\item \textbf{Multi-Dimensional Contract Design:} The UAV types, e.g., sensing, traversal, and transmission costs, are private information not known to the model owner. As such, the model owner designs a multi-dimensional contract to leverage on the self-revealing mechanism so as to select the optimal UAV to cover each subregion. In particular, the model owner can maximize its profits by choosing the lowest cost UAV among all feasible UAVs that can complete the task within the time constraint.


\item \textbf{UAV-Subregion Assignment:} Each UAV reports its type and ranks the $N$ subregions based on its preferences of coverage to the FL model owner. Then, a stable UAV-subregion matching is derived using the Gale-Shapley (GS) algorithm. Note that each UAV's preference can vary across different subregions. As an illustration, we consider a representative UAV $j$ with base $C_j$ and two subregions $n$ and $n^{\prime}$ where $l^{n^\prime}_{C_j}\gg l^n_{C_j}, \forall i_n \in I_n \text{ and } \forall i_{n^\prime} \in I_{n^\prime}$. In this case, UAV $j$ has to traverse a longer distance to reach subregion $n^{\prime}$. As such, it is able to cover a smaller proportion of subregion $n^{\prime}$ relative to $n$ due to energy and time constraints. 

\end{enumerate}

 In the following, we consider the sensing, computation, and data transmission model of a representative UAV.

\subsection{UAV Sensing Model}
\label{sec:uavsensing}

\begin{table}[]
\centering
\caption{Table of commonly used notations.}\label{tab:tablenotation}
\begin{tabular}{|l|l|}

\hline
\rowcolor[HTML]{C0C0C0} 
\textbf{Notation} & \textbf{Description}                                                                                           \\ \hline
              $n$    & Subregion                                                                                        \\ \hline
                 $j$ &   UAV                                                                            \\ \hline
                 $C_j$ & Base of UAV $j$                                                                                         \\ \hline
                  $l^n_j$ & Total sensing distance                                                                                      \\ \hline
                  $l^n_{C_j}$ & Total traversal distance                                                                                     \\ \hline
 $\tau_P^{j,n}$ & Total duration taken for traversal and sensing                                                                                    \\ \hline
 $ E_P^{j,n}$ & Total energy taken for traversal and sensing                                                                                    \\ \hline
 $ \alpha^n_j$ & Marginal cost of node coverage for sensing                                                                                    \\ \hline
 $ \psi^n_j$ & Traversal cost                                                                                    \\ \hline
 $\tau_C^{j,n}$ & Local computation duration                                                                                  \\ \hline
 $E^{j,n}_C$ & Total energy taken for computation                                                                                  \\ \hline
  $\beta_j$ & Marginal cost of node coverage for computation                                                                                \\ \hline
  $\tau_T^{j,n}$ & Total duration for transmission                                                                               \\ \hline
    $\zeta^n_j$ & Energy taken for transmission                                                                            \\ \hline
      $u^n_j$ & UAV utility                                                                            \\ \hline
      $R^n_j$ & Contractual rewards                                                                            \\ \hline
      $\phi$ & Unit cost of energy for the UAV                                                                          \\ \hline
 $\Pi$ & Model owner profit                                                                        \\ \hline
  $\Omega^n,\omega^n$ & Contract set and individual contract                                                                      \\ \hline
    $\tilde{R}$ & Compensation for sensing and computation costs                                                                        \\ \hline
  $\hat{R}$ & Compensation for traversal and transmission costs                                                                        \\ \hline
  $\upsilon(\alpha_y,\beta_z)$ & Marginal cost of node coverage                                                                       \\ \hline
  $\Phi_i$  & UAV auxiliary type \\ \hline
\end{tabular}
\end{table}

We consider a representative UAV $j$ tasked by the model owner to cover a proportion of nodes in the subregion $n$. Denote the node coverage assignment of UAV $j$ in subregion $n$ to be $\mathcal{A}^{j, n}=\left\{a_{i, i^{\prime}}^{j, n} | \forall i, i^{\prime} \in \mathcal{I}_{n}, i \neq i^{\prime}\right\}$ where $a^{j,n}_{i,i'}=1$ represents that the UAV has to fly through the segment between nodes $i$ and $i'$, and $a^{j,n}_{i,i'}=0$ implies otherwise.



The total distance $l^n_j$ traveled for sensing by UAV $j$ under assignment $\mathcal{A}^{j, n}$ is as follows:
\begin{equation}
l^n_j= \sum_{i' \neq i, i' \in \mathcal{I}_n}a^{j,n}_{i,i'} l^n_{i,i'}.
\numberthis
\label{eqn:uavdistance}
\end{equation}
Denote $\theta^{n}_j =  \frac{\sum_{i' \neq i, i' \in \mathcal{I}_n}a^{j,n}_{i,i'}}{|\mathcal{I}_n|}$ where $\vert \cdot \vert$ indicates cardinality, i.e., $\theta^{n}_j$ refers to the proportion of node coverage by UAV $j$ in subregion $n$ where $0 \leq \theta^{n}_j \leq 1$.

Apart from traveling between the nodes, the UAV has to travel to and from its base. Denote the total distance traveled by the UAV as  $L^n_j= l^n_j +
 l^n_{C_j}$. Hereinafter, we refer to $l^n_j$ as the \textit{sensing} distance, whereas $ l^n_{C_j}$ refers to the \textit{traversal} distance.

%




Following the works of \cite{zeng2019energy,zhang2018predictive}, each UAV travels with an average velocity $v_j$ and expends a fixed propulsion power $p_j =c_{j,1}v^3_{j} + \frac{c_{j,2}}{v_j}$ throughout the task for tractability, where $c_{j,1}$ and $c_{j,2}$ refers to the required power to balance the parasitic drag caused by skin friction and required power to balance the drag force of air redirection respectively\footnote{In practice, the propulsion power is in turn a function of other factors, e.g., reference area of the UAV and wing aspect ratio and weight. For simplicity, we consider that $p_j$ accounts for these factors.}. Note that the propulsion power consumed by the UAV when it changes its direction is negligible \cite{zeng2017energy}. The total duration taken for traversal and sensing is denoted $\tau_P^{j,n}=\frac{L^n_j}{ v_j}$, whereas the total energy consumed to cover the traversal and sensing distance is as follows:
\begin{flalign*}
 E_P^{j,n} = \frac{L^n_j}{ v_j} p_j 
 & = \frac { \theta^{n}_j l^n +l^n_{C_j}  }{v_j} p_j \\
 & = \frac{p_j l^n}{v_j} \theta^{n}_j + \frac{l^n_{C_j}}{v_j}p_j \\
 & = \alpha^n_j \theta^{n}_j + \psi^{n}_j,
\numberthis
\label{eqn:uavenergy}
\end{flalign*}
where  $l^n$ is the distance traveled by the UAV if it covers all nodes, i.e., $\theta^n_j=1$, $\alpha^n_j =  \frac{p_j l^n}{v_j} $ and $\psi^n_j = \frac{l^n_{C_j}}{v_j}p_j$ for notation simplicity. Note that $\alpha^n_j$  represents the sensing cost, i.e., marginal cost of node coverage for sensing in the subregion, whereas $\psi^{n}_j$ refers to the traversal cost, i.e., the energy cost of traveling to and from the base. A higher $\alpha^n_j$ can imply that the UAV $j$ requires greater propulsion power to complete the task, e.g., due to its larger weight or wing-aspect ratio, whereas a higher $\psi^{n}_j$ implies either a greater propulsion power to move, or a greater traversal cost, i.e., the subregion is farther away from the base. While the value of $\alpha^n_j$ varies across subregions due to the varying $l_n$, i.e., the marginal cost of node coverage varies according to the sensing area of the subregion, the \textit{ordering} of the UAV types based on the sensing costs is retained. On the other hand, the order of UAVs by traversal costs varies across subregions, based on the distance between the UAV base and each of the subregions.


\subsection{UAV Computation Model}

After the UAV $j$ covers its assigned set of nodes following assignment $\mathcal{A}^{j, n}$, it returns to the base $C_j$ for an FL based model training over $K$ global iterations where $\mathcal{K}=\{1,\ldots,k,\ldots,K\}$ to minimize the global loss $F^K\left(\boldsymbol{w}\right)$. Following \cite{konevcny2016federated}, each training iteration $k$ consists of three steps namely: (i) \textit{Local Computation}, i.e., the UAV trains the received global model $\boldsymbol{w}^{(k)}$ locally using the sensing data, (ii) \textit{Wireless Transmission}, i.e., the UAV transmits the model parameter update $\bm{h}^{(k)}_j$ to the model owner, and (iii) \textit{Global Model Parameter Update}, i.e., all parameter updates derived from the $N$ subregions are aggregated to derive an updated global model $\boldsymbol{w}^{(k+1)}$, where $\boldsymbol{w}^{(k+1)}=\cup_{j\in\mathcal{N}}(\boldsymbol{w}^{(k)}_{j}+\bm{h}^{(k)}_j)$, which is then transmitted back to the UAVs for the $(k+1)^{th}$ training iteration.

In general, a series of local model training is performed by the UAV to minimize an $L$-Lipschitz and $\gamma$-strongly convex local loss function $G_j$ up to the target accuracy $A^*$ defined by the model owner to derive the parameter update. Note that a larger value of $A^*$ implies greater deviation from the optimal value. Moreover, $0<A^*<1$, i.e., the local solution $\bm{h}^{(k)}_j$ does not have to be trained to optimality, e.g., to reduce local computation duration especially for time sensitive tasks. In particular, following the formulation in \cite{yang2019energy}:
\begin{flalign}
\nonumber
& G_{j}\left(\boldsymbol{w}^{(k)}, \boldsymbol{h}_{j}^{(k)}\right)-G_{j}\left(\boldsymbol{w}^{(k)}, \boldsymbol{h}_{j}^{(k) *}\right)  \\
& \quad\quad\quad \leq A^*\left(G_{j}\left(\boldsymbol{w}^{(k)}, \mathbf{0}\right)-G_{j}\left(\boldsymbol{w}^{(k)}, \boldsymbol{h}_{j}^{(k) *}\right)\right).
\end{flalign}


The FL training is completed after $K=\frac{a}{1-A^*}$ global iterations where $a = \frac{2L^2}{\gamma ^2 \xi}$ and $0 \leq \xi \leq \frac{\gamma}{L}$. The total local computation duration $\tau_C^{j,n}$ is as follows:

\begin{flalign*}
\tau_C^{j,n}=K \left( \frac{V C_{j} \theta^{n}_{j} D^n \log _{2}(1 /A^*)}{f_{j}} \right),
\numberthis
\label{eqn:computationtime}
\end{flalign*}
whereas the energy consumption of UAV $j$ for computation is as follows:
\begin{flalign*}
E^{j,n}_C=K \left(\kappa C_{j} \theta^{j,n} D^n V \log _{2}(1 /A^*) f_{j}^{2}\right) = \beta_j \theta^{n}_j.
\numberthis
\label{eqn:computationenergy}
\end{flalign*}
$\kappa$ is the effective switched capacitance that depends on the chip architecture \cite{mao2016dynamic}, $C_j$ is the cycles per bit for computing one sample data of UAV $j$, $ \theta^{j,n} D^n$ is the unit of data samples collected by UAV $j$, $V \log _{2}(1 /A^*)$ refers to the lower bound on number of local iterations required to achieve local accuracy $A^*$ \cite{yang2019energy} where $V=\frac{2}{(2-L \delta) \delta \gamma}$, and $f_j$ refers to the computation capacity of the UAV $j$, measured by CPU cycles per second. For ease of notation, we denote $\beta_j = \kappa KC_{j} D^n V \log _{2}(1 /A^*) f_{j}^{2}$, i.e., a higher $\beta_j$ implies greater energy cost for computation per additional node coverage. Similar to $\alpha^n_j$, the value of $\beta^n_j$ varies across subregion due to the different units of data samples available for computation. However, the ordering of the UAV types based on computation costs is retained.

\subsection{UAV Transmission Model}
\label{sec:uavtransmission}

After local computation, the wireless transmission takes place from the selected UAVs to the model owner. For simplicity, we denote the achievable rate of the UAV $j$ to be a product of its transmit power $\rho_j$ and a scaling factor $\lambda^n_j$ which covers other considerations, e.g., bandwidth allocation and channel gain. 

The total time $\tau_T^{j,n}$ taken by the UAV to upload parameter update $\bm{h}^{(k)}_j$ of size $H$ is as follows: $\tau_T^{j,n} = K\frac{H}{\lambda^n_j\rho_j}$. Note that the model upload size is constant regardless of the number of global iterations or quantity of data collected, given the fixed dimensions of the model update. The transmission energy consumption, denoted $\zeta^n_j$, is as follows:
\begin{equation}
E^{j,n}_T=\tau_T^{j,n}\rho_j =\zeta^n_j.
\numberthis
\label{eqn:transmissionenergy}
\end{equation}




\subsection{UAV and Model Owner Utility Modeling}

The utility function of a representative UAV $j$ covering subregion $n$ can be expressed as follows:
\begin{flalign*}
 u^n_j  (\theta^{n}_j) &= R^n_j (\theta^{n}_j) - \phi\left(E_P^{j,n} + E^{j,n}_C + E^{j,n}_T \right) \\ 
&  =  R^n_j(\theta^{n}_j) -  \phi\left(\alpha^n_j \theta^{n}_j+ \psi^{n}_j + \beta^n_j \theta^{n}_j + \zeta^{n}_j \right),\\
\label{eqn:uavutility}
\numberthis
\end{flalign*}
where $R^n_j(\theta^{n}_j)$ refers to the contractual rewards and $\phi$ refers to the unit cost of energy.

%

Following \cite{liu2018edge,khan2019federated,zhan2020learning}, the FL model accuracy $\Upsilon(\sum^N_{n=1} \theta ^n_{j^*}D^n )$ is a concave function of the aggregate data collected across $N$ subregions by the $N$ selected UAVs. In particular, the inference accuracy of the model is improved when more nodes are covered, i.e., a model trained using data across a more comprehensive coverage of classes may be built. Without loss of generality, we consider the aggregate model performance to be an  average of node coverage across all regions, analogous to the Federated Averaging algorithm:
\begin{flalign}
\Upsilon \left(\sum^N_{n=1} \theta ^n_{j^*} D^n \right) = \frac{1}{N} \sum^N_{n=1} log(1+ \mu\theta ^n_{j^*}D^n ),
\label{eqn:accuracy}
\numberthis
\end{flalign}
where $\mu>0$ is the system parameter. The total profit obtained from all UAVs is thus as follows: 
\begin{flalign}
 \Pi(\Omega) =\sigma \Upsilon \left(\sum^N_{n=1} \theta ^n_{j^*} D^n \right) - \sum^N_{n=1}R^n_{j*}
\label{eqn:modelownerutility}
\numberthis
\end{flalign}
where $\sigma>0$ refers to the conversion parameter from model performance to profits, and the contractual reward expense for each selected UAV is denoted as $R^n_{j^*}$. In the next section, we devise the optimal contract which satisfies the Individual Rationality and Incentive Compatibility constraints.

\section{Multi-Dimensional Contract Design}
\label{sec:contract}

In this section, we first consider a multi-dimensional contract formulation. To solve the multi-dimensional contract, we sort the UAV types according to an auxiliary variable which reflects the marginal cost of node coverage. Then, we relax the constraints for contract feasibility and include a fixed compensation component for traversal and transmission costs so as to solve for the optimal contract.

\subsection{Contract Condition Analysis}

Given that the sensing cost $\alpha$, traversal cost $\psi$, computation cost $\beta$, and transmission cost $\zeta$ are all private information that are not precisely known by the model owner, we consider the multi-dimensional contract theoretic incentive mechanism design to leverage on its self-revealing properties. 

The UAVs can be classified into different types to characterize their heterogeneity. In particular, the UAVs can be categorized into a set $\Psi= \{\psi^n_x: 1 \leq x \leq X \}$ of $X$ traversal cost types, set $\mathcal{A}= \{\alpha^n_y: 1 \leq y \leq Y \}$ of $Y$ sensing cost types, set $\mathcal{B}= \{\beta^n_z: 1 \leq z \leq Z \}$ of $Z$ computation cost types, and set $\mathcal{C} = \{\zeta^n_q: 1 \leq q \leq Q \}$ of Q transmission cost types. 

Without loss of generality, we also assume that the user types are indexed in non-decreasing orders in all four dimensions: $0<\psi_1 \leq \psi_2 \leq \cdots \leq \psi_X$, $0<\alpha^n_1 \leq \alpha^n_2 \leq \cdots \leq \alpha^n_Y$, $0 < \beta^n_1 \leq \beta^n_2 \leq \cdots \leq \beta^n_Z $, and $0 < \zeta^n_1 \leq \zeta^n_2 \leq \cdots \leq \zeta^n_Q $. For ease of notation, we represent a UAV of traversal cost type $x$, sensing cost type $y$, computation cost type $z$, and transmission cost type $q$ to be that of type-$(x,y,z,q)$. 

To enforce the UAVs to truthfully reveal their private information, we adopt a two-step procedure for the contract design: 

\begin{enumerate} 

\item \textit{Multi-Dimensional Contract Design:} We convert the multi-dimensional problem into a single-dimensional contract formulation following the approach in \cite{wang2019multi}. In particular, we sort the UAVs by an auxiliary, one-dimensional type $\Phi(\alpha^n_y,\beta^n_z)$ in the ascending order based on the marginal cost of node coverage, i.e., sensing and computation cost types. Then, we solve for the optimal contract for each subregion $n$ denoted $\Omega^n (\mathcal{A},\mathcal{B}) = \{\omega^n_{y,z}: 1 \leq y \leq Y, 1 \leq z \leq Z  \}$ where $n \in \mathcal{N}$ to derive the optimal node coverage-contract reward bundle $\{\theta_{y,z}^n,\tilde{R}^n_{y,z}\}$.

\item \textit{Traversal Cost Compensation:} In contrast to existing works on multi-dimensional contracts, the UAVs also incur the additional traversal cost and transmission cost components, both of which are not coupled with the marginal cost of node coverage. In other words, these costs have to be incurred regardless of the number of nodes a UAV decides to cover in the subregion. For each contractual reward, we add in a fixed compensation $\hat{R}$ to derive the final contract bundle $\{\theta_{y,z}^n,(\tilde{R}^n_{y,z}+\hat{R})\}$.

\end{enumerate}

We first discuss the multi-dimensional contract formulation as follows. A contract is feasible only if the Individual Rationality (IR) and Incentive Compatibility (IC) constraints hold simultaneously.

\begin{definition}
\label{def:ir}
\normalfont
Individual Rationality (IR): Each type-$(y,z)$ UAV achieves non-negative utility if it chooses the contract item designed for its type, i.e., contract item $\omega_{y,z}$.
	\begin{align*}
	u_{y,z}(\omega_{y,z}) \geq 0 , 1 \leq y \leq Y, 1 \leq z \leq Z.\
	\label{eqn:ir}
	\numberthis
	\end{align*}
\end{definition}

\begin{definition}
\label{def:ic}
\normalfont
Incentive Compatibility (IC): Each type-$(y,z)$ UAV achieves the maximum utility if it chooses the contract item designed for its type, i.e., contract item $\omega_{y,z}$. As such, it has no incentive to choose contracts designed for other types.
	\begin{align*}
	u_{y,z}(\omega_{y,z}) \geq u_{,y,z}(\omega_{y',z'}), 
1 \leq y \leq Y, 
	1 \leq z \leq Z, \\
	 y \neq y^{\prime}, z \neq z^{\prime}.
	\label{eqn:ic}
	\numberthis
	\end{align*}
\end{definition}
The multi-dimensional contract formulation is as follows:
\begin{flalign*}
\label{eqn:contractoptimization}
& \max_{\Omega} \Pi(\Omega ^n (\mathcal{A},\mathcal{B}))\\ 
& \textrm{s.t. } (\ref{eqn:ir}),(\ref{eqn:ic}) .
\numberthis
\end{flalign*}
However, the optimization problem in (\ref{eqn:contractoptimization}) involves $YZ$, i.e., IR constraints and $YZ(YZ-1)$, i.e., IC constraints, all of which are non-convex. Therefore, we first convert the contract into a single-dimensional formulation in the next section.




\subsection{Conversion Into A Single-Dimensional Contract}

In order to account for the marginal cost of node coverage, we consider a revised utility $\tilde{u}_{y,z}$ of the UAV type-$(y,z)$ that excludes the traversal and transmission costs as follows:
\begin{flalign*}
 \tilde{u}_{y,z} (\theta(y,z),\tilde{R}_{y,z}) 
 =  \eta(\alpha_y,\beta_z)+ \tilde{R}_{y,z},
\label{eqn:uavutility}
\numberthis
\end{flalign*}
where we denote $\eta(\alpha_y,\beta_z)= - \phi( \alpha_y \theta(y,z) + \beta_z \theta(y,z) )$ for ease of notation, and $\tilde{R}_{y,z}$ refers to the contractual reward arising from the multi-dimensional contract design. To focus on a representative contract, we drop the $n$ superscripts for now. Given that the ranking of marginal cost types does not change across subregion, note that our contract design is a general one applicable to all subregions. 

We derive the marginal cost of node coverage $\upsilon(\alpha_y,\beta_z)$ for the type-($y,z$) UAV as follows: 
\begin{flalign*}
\upsilon(\alpha_y,\beta_z) = -\frac{\partial \eta(\alpha_y,\beta_z)}{\partial \theta(y,z)} = \phi(\alpha_{y} + \beta_{z}).
\numberthis
\label{eqn:auxiliary}
\end{flalign*}
Intuitively, $\frac{\partial \eta(\alpha_y,\beta_z)}{\partial \theta(y,z)} < 0$ since the coverage of an additional node results in the additional expenses of sensing and computation costs. A larger value of $\upsilon(\alpha_y,\beta_z)$ implies a larger marginal cost of node coverage, due to the greater sensing and computation costs incurred for a particular UAV type.

We can now sort the $YZ$ UAVs according to their marginal cost of node coverage in a non-decreasing order as follows:
\begin{flalign*}
\Phi_1(\theta) , \Phi_2(\theta) , \ldots , \Phi_i(\theta) , \ldots , \Phi_{YZ}(\theta),
\label{eqn:auxiliary}
\numberthis
\end{flalign*}
where $\Phi_i(\theta)$ denotes the auxiliary type-$\Phi_i(\theta)$ user. Given the sorting order, the UAV types are in an ascending order based on their marginal cost of node coverage:
\begin{flalign*}
\upsilon(\theta,\Phi_1) \leq \upsilon(\theta,\Phi_2) \leq \cdots \leq \upsilon(\theta,\Phi_i) \leq \cdots \leq \upsilon(\theta,\Phi_{YZ}),
\label{eqn:auxiliary}
\numberthis
\end{flalign*}

Note that for ease of notation, we use type-$\Phi_i$ or type-$i$ interchangeably to represent the auxiliary type-$i$ user. In addition, we refer to $\eta_i (\theta_i)$ and $\eta (\theta_i,\Phi_i)$ interchangeably to represent the new ordering subsequently. Similarly, to represent the marginal cost of node coverage, we use $\upsilon_i (\theta_i)$ and $\upsilon (\theta_i,\Phi_i)$. In the next section, we derive the necessary and sufficient conditions for the contract design.

\subsection{Conditions For Contract Feasibility}

We derive the necessary conditions to guarantee contract feasibility based on the IR and IC constraints as follows.

\begin{lemma} 
\label{lemma:feasible}
\normalfont
For any feasible contract $\Omega \{ \mathcal{A},\mathcal{B} \}$, we have $\theta_i  < \theta_{i^\prime}$ if and only if $\tilde{R}_i < \tilde{R}_i'$, $i \neq i^\prime$.
\end{lemma}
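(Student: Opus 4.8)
The plan is to establish the two directions of the ``if and only if'' separately, both by manipulating the IC constraints of Definition \ref{def:ic} specialized to the single-dimensional auxiliary-type ordering. Throughout I will use the revised utility $\tilde{u}_i(\theta_i,\tilde{R}_i) = \eta_i(\theta_i) + \tilde{R}_i$ with $\eta_i(\theta_i) = -\phi(\alpha_{\Phi_i} + \beta_{\Phi_i})\theta_i = -\upsilon_i\theta_i$, where $\upsilon_i$ is the marginal cost of node coverage and the types are sorted so that $\upsilon_1 \leq \cdots \leq \upsilon_{YZ}$.

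First I would prove the forward direction: if $\theta_i < \theta_{i'}$ then $\tilde{R}_i < \tilde{R}_{i'}$. Write down the IC constraint for type $i$ not wanting type $i'$'s bundle, $\eta_i(\theta_i) + \tilde{R}_i \geq \eta_i(\theta_{i'}) + \tilde{R}_{i'}$, which rearranges to $\tilde{R}_i - \tilde{R}_{i'} \geq \eta_i(\theta_{i'}) - \eta_i(\theta_i) = -\upsilon_i(\theta_{i'} - \theta_i)$; and the symmetric IC constraint for type $i'$, giving $\tilde{R}_{i'} - \tilde{R}_i \geq -\upsilon_{i'}(\theta_i - \theta_{i'}) = \upsilon_{i'}(\theta_{i'} - \theta_i)$. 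Adding the two yields $0 \geq (\upsilon_{i'} - \upsilon_i)(\theta_{i'} - \theta_i)$; but since $\theta_i < \theta_{i'}$ and $\upsilon_{i'} \geq \upsilon_i$ this is automatically consistent, so instead I use the single constraint from type $i'$ directly: $\tilde{R}_{i'} - \tilde{R}_i \geq \upsilon_{i'}(\theta_{i'} - \theta_i) > 0$ provided $\upsilon_{i'} > 0$, which holds because $\alpha, \beta, \phi > 0$. (If $\upsilon_{i'} = \upsilon_i$ so the inequality is only weak, one falls back on the other IC constraint to pin down strictness; I would note that the ordering can be taken strict among distinct auxiliary types, or handle ties by the monotonicity-plus-continuity argument used in the sorting step.) Hence $\tilde{R}_i < \tilde{R}_{i'}$.

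For the converse, suppose $\tilde{R}_i < \tilde{R}_{i'}$; I want $\theta_i < \theta_{i'}$. I would argue by contradiction and cases. If $\theta_i = \theta_{i'}$, the IC constraint $\eta_i(\theta_i) + \tilde{R}_i \geq \eta_i(\theta_{i'}) + \tilde{R}_{i'}$ reduces to $\tilde{R}_i \geq \tilde{R}_{i'}$, contradicting the hypothesis. If $\theta_i > \theta_{i'}$, then by the forward direction applied with the roles of $i$ and $i'$ swapped we would get $\tilde{R}_{i'} < \tilde{R}_i$, again a contradiction. Therefore $\theta_i < \theta_{i'}$, completing the equivalence.

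The main obstacle I anticipate is the handling of equal marginal costs, i.e. auxiliary types $i \neq i'$ with $\upsilon_i = \upsilon_{i'}$: there the pair of added IC inequalities only gives $0 \geq 0$ and does not force a strict reward gap, so one must either argue such types can be merged/relabeled without loss of generality, or invoke that a feasible contract with $\theta_i \neq \theta_{i'}$ but identical marginal cost would violate IC strictly (since then one of the two types strictly prefers the other's bundle). I would resolve this by noting, as the paper does when constructing the ordering, that types with identical $\upsilon$ are assigned identical $(\theta,\tilde{R})$ bundles, so the statement is about genuinely distinct marginal-cost types; with that convention the strict inequalities above go through cleanly. Everything else is routine rearrangement of the linear-in-$\theta$ utility.
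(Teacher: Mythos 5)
Your proof is correct and follows essentially the same route as the paper's---rearranging the pairwise IC constraints and using that $\eta$ is strictly decreasing in $\theta$---but you are actually more careful: for the direction $\theta_i<\theta_{i'}\Rightarrow\tilde{R}_i<\tilde{R}_{i'}$ you correctly invoke the IC constraint of type $i'$ (the paper's corresponding step uses type $i$'s constraint and, as written, lands on the reversed inequality $\tilde{R}_{i'}<\tilde{R}_i$), and your explicit treatment of the $\theta_i=\theta_{i'}$ case in the converse is a useful addition. Your worry about ties $\upsilon_i=\upsilon_{i'}$ is unnecessary for the single-constraint argument you settle on, since strictness there needs only $\upsilon_{i'}>0$ and $\theta_{i'}>\theta_i$.
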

\begin{proof}
We first prove the sufficiency, i.e., if $R_i < R_i' \Rightarrow \theta_i  < \theta_{i^\prime}$. From the IC constraint of type-$\Phi_i$ UAV we have:
\begin{align*}
	\eta (\theta_i,\Phi_i) + \tilde{R}_i \geq  \eta (\theta_i,\Phi_{i'}) + \tilde{R}_{i'},\\
	\eta (\theta_i,\Phi_i) - \eta (\theta_i,\Phi_{i'})  \geq    \tilde{R}_{i'}- \tilde{R}_i > 0,
	\numberthis
		\end{align*}
which implies:
\begin{align*}
	\eta (\theta_i,\Phi_i)\geq \eta (\theta_i,\Phi_{i'}),
	\numberthis
		\end{align*}
Given that $\frac{\partial \eta (\theta_i,\Phi_i)}{\partial \theta_i}<0$, we can deduce $\theta_i  < \theta_{i^\prime}$.

Next, we prove the necessity, i.e., $\theta_i  < \theta_{i^\prime}\Rightarrow \tilde{R}_i < \tilde{R}_i'$. Similarly, we consider the IC constraint of the type-$\Phi_i$ UAV:
\begin{align*}
	\eta (\theta_i,\Phi_i) + \tilde{R}_i \geq  \eta (\theta_{i'},\Phi_{i}) + \tilde{R}_{i'},\\
	\eta (\theta_i,\Phi_i) -  \eta (\theta_{i'},\Phi_{i}) \geq    \tilde{R}_{i'}- \tilde{R}_i.
	\numberthis
		\end{align*}
Given $\theta_i  < \theta_{i^\prime}$, we deduce $\eta (\theta_i,\Phi_i) <  \eta (\theta_{i'},\Phi_{i})$, which follows that $\tilde{R}_{i'} < \tilde{R}_i$. The proof is now completed.
\end{proof}

\begin{lemma} 
\label{lemma:monotonic}
\normalfont
Monotonicity: For any feasible contract $\Omega \{ \mathcal{A},\mathcal{B} \}$, if $\upsilon (\theta_i,\Phi_i)>\upsilon (\theta_i,\Phi_{i^\prime})$, it follows that $\theta_i \leq \theta_{i^\prime}$.
\end{lemma}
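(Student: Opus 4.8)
The plan is to run the standard two-type ``exchange'' argument from screening theory, using only the two IC constraints that pit types $\Phi_i$ and $\Phi_{i'}$ against each other, together with the fact (from the definition of $\upsilon$) that $\upsilon=-\partial\eta/\partial\theta$ acts as a single-crossing parameter. First I would write the IC constraint stating that type-$\Phi_i$ weakly prefers its own item $\omega_i$ to $\omega_{i'}$,
\[
\eta(\theta_i,\Phi_i) + \tilde{R}_i \;\ge\; \eta(\theta_{i'},\Phi_i) + \tilde{R}_{i'},
\]
and the IC constraint that type-$\Phi_{i'}$ weakly prefers $\omega_{i'}$ to $\omega_i$,
\[
\eta(\theta_{i'},\Phi_{i'}) + \tilde{R}_{i'} \;\ge\; \eta(\theta_i,\Phi_{i'}) + \tilde{R}_i .
\]
Adding the two cancels the reward terms entirely and leaves the ``no-envy'' inequality
\[
\eta(\theta_i,\Phi_i) - \eta(\theta_i,\Phi_{i'}) \;\ge\; \eta(\theta_{i'},\Phi_i) - \eta(\theta_{i'},\Phi_{i'}).
\]

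Next I would substitute the closed form $\eta(\theta,\Phi_i) = -\upsilon_i\,\theta$, which follows from the definition of $\eta$ and the identity $\upsilon(\alpha_y,\beta_z)=\phi(\alpha_y+\beta_z)$ (so that $\upsilon$ is in fact constant in its first argument, i.e. $\upsilon(\theta_i,\Phi_i)=\upsilon_i$). Both sides of the no-envy inequality then factor through the same scalar $\upsilon_{i'}-\upsilon_i$, so it collapses to
\[
(\upsilon_{i'}-\upsilon_i)\,\theta_i \;\ge\; (\upsilon_{i'}-\upsilon_i)\,\theta_{i'},
\]
that is, $(\upsilon_{i'}-\upsilon_i)(\theta_i-\theta_{i'})\ge 0$. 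The hypothesis $\upsilon(\theta_i,\Phi_i) > \upsilon(\theta_i,\Phi_{i'})$, i.e. $\upsilon_i>\upsilon_{i'}$, makes the leading factor strictly negative, so dividing through (and flipping the sense of the inequality) yields $\theta_i-\theta_{i'}\le 0$, which is exactly $\theta_i\le\theta_{i'}$.

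If one prefers not to invoke the explicit linear form of $\eta$, the same conclusion follows by setting $g(\theta) := \eta(\theta,\Phi_i)-\eta(\theta,\Phi_{i'})$: the summed-IC inequality reads $g(\theta_i)\ge g(\theta_{i'})$, while $g'(\theta) = \partial_\theta\eta(\theta,\Phi_i)-\partial_\theta\eta(\theta,\Phi_{i'}) = \upsilon_{i'}-\upsilon_i < 0$ under the hypothesis, so $g$ is strictly decreasing and $g(\theta_i)\ge g(\theta_{i'})$ again forces $\theta_i\le\theta_{i'}$. This argument does not use Lemma~\ref{lemma:feasible}. The only delicate point — essentially the sole obstacle — is sign bookkeeping: $\eta$ is decreasing in $\theta$ while $\upsilon=-\partial_\theta\eta>0$, so a strictly larger marginal cost $\upsilon_i$ corresponds to a strictly steeper decrease of $\eta$ in $\theta$, and one must carry this through the chain of inequalities to be sure the conclusion emerges as the \emph{weak} inequality $\theta_i\le\theta_{i'}$ (the IC constraints are weak, and nothing here excludes $\theta_i=\theta_{i'}$).
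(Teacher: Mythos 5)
Your proposal is correct and follows essentially the same route as the paper's proof: both sum the two pairwise IC constraints to cancel the reward terms and then exploit the ordering of the marginal costs $\upsilon$ (your second, $g(\theta)$-based variant is precisely the paper's fundamental-theorem-of-calculus step, just stated directly rather than by contradiction). The only cosmetic difference is that you also offer the explicit linear form $\eta(\theta,\Phi_i)=-\upsilon_i\theta$, which the paper leaves implicit.
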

\begin{proof}
We adopt the proof by contradiction to validate the monotonicity condition. We first assume that there exists $\theta_i > \theta_{i^\prime}$ such that $\upsilon (\theta_i,\Phi_i)>\upsilon (\theta_i,\Phi_{i^\prime})$, i.e., the lemma is incorrect.

We consider the IC constraints for the type $\Phi_i$ and $\Phi_{i^\prime}$ UAV:
\begin{flalign*}
	\eta (\theta_i,\Phi_i) + \tilde{R}_i \geq  \eta (\theta_{i'},\Phi_{i}) + \tilde{R}_{i'},\\
	\eta (\theta_{i^\prime},\Phi_{i^\prime}) + \tilde{R}_{i^\prime} \geq  \eta (\theta_i,\Phi_{i'}) + \tilde{R}_{i}.
\end{flalign*}

Then, we add the constraints together and rearrange the terms to obtain:
\begin{flalign*}
	 \left[\eta (\theta_i,\Phi_i)- \eta (\theta_{i'},\Phi_{i})\right] - \left[\eta (\theta_i,\Phi_{i'})- \eta (\theta_{i^\prime},\Phi_{i^\prime})\right] \geq 0.
	 \numberthis
	 \label{eqn:rearrange}
\end{flalign*}

By the fundamental theorem of calculus, we have:
\begin{flalign*}
	 &\quad\quad \left[\eta (\theta_i,\Phi_i)- \eta (\theta_{i'},\Phi_{i})\right] - \left[\eta (\theta_i,\Phi_{i'})- \eta (\theta_{i^\prime},\Phi_{i^\prime})\right]\\
	 &=\int_{\theta_{i'}}^{\theta_{i}} \frac{\partial \eta (\theta,\Phi_i)} {\partial \theta}d \theta-\int_{\theta_{i'}}^{\theta_{i}}  \frac{\partial \eta (\theta,\Phi_{i'})} {\partial \theta}d \theta \\
	 & = \int_{\theta_{i'}}^{\theta_{i}}\left[\frac{\partial \eta (\theta,\Phi_i)} {\partial \theta} -\frac{\partial \eta (\theta,\Phi_{i'})} {\partial \theta}   \right] d \theta\\
	 & = -\int_{\theta_{i'}}^{\theta_{i}}\left[\upsilon\left(\theta,\Phi_i \right) - \upsilon\left(\theta,\Phi_{i'} \right)  \right]d \theta.
	 \numberthis
	 \label{eqn:fund}
\end{flalign*}
Given (\ref{eqn:contractoptimization}), as well as the assumption $\theta_i > \theta_{i^\prime}$ and $\eta (\theta_i,\Phi_i)>\eta (\theta_i,\Phi_{i^\prime})$, we can deduce that (\ref{eqn:fund}) is negative, which contradicts with  (\ref{eqn:rearrange}). As such, there does not exist  $\theta_i > \theta_{i^\prime}$ and $\eta (\theta_i,\Phi_i)>\eta (\theta_i,\Phi_{i^\prime})$ for the feasible contract, which confirms that the lemma is correct. The proof is now completed. 
\end{proof}

As such, Lemmas  $1$ and $2$ give us the necessary conditions of the feasible contract in the following theorem.

\begin{theorem}
\normalfont
A feasible contract must meet the following conditions:
\label{monotonicitycondition}
\end{theorem}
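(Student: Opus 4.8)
The plan is to collect the two necessary conditions already established in Lemmas~\ref{lemma:feasible} and \ref{lemma:monotonic} and package them as the statement of Theorem~\ref{monotonicitycondition}, which characterizes every feasible contract $\Omega\{\mathcal{A},\mathcal{B}\}$. Concretely, I expect the theorem to assert that for any feasible contract: (i) the node-coverage levels and the contractual rewards are co-monotone across auxiliary types, i.e. $\theta_i<\theta_{i'}\iff\tilde{R}_i<\tilde{R}_{i'}$; and (ii) a lower marginal cost of node coverage entails a weakly higher coverage assignment, i.e. if $\upsilon(\theta_i,\Phi_{i'})<\upsilon(\theta_i,\Phi_i)$ then $\theta_i\le\theta_{i'}$. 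Equivalently, since the auxiliary types are indexed in non-decreasing order of marginal cost, this reads $\upsilon_1\le\upsilon_2\le\cdots\le\upsilon_{YZ}\ \Rightarrow\ \theta_1\ge\theta_2\ge\cdots\ge\theta_{YZ}$ and $\tilde{R}_1\ge\tilde{R}_2\ge\cdots\ge\tilde{R}_{YZ}$.

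The proof itself is essentially immediate: it is an assembly step rather than a fresh argument. First I would invoke Lemma~\ref{lemma:monotonic} to obtain the monotonicity of $\theta_i$ in the auxiliary ordering --- the type with the smallest marginal cost of node coverage is assigned the largest coverage proportion, and so on down the ordering. Then I would invoke Lemma~\ref{lemma:feasible} to transfer this ordering to the rewards: since $\theta_i<\theta_{i'}\iff\tilde{R}_i<\tilde{R}_{i'}$, the reward sequence inherits exactly the same (reversed, relative to cost) ordering as the coverage sequence. Combining the two yields the full chain of inequalities, which is precisely the set of conditions the theorem claims any feasible contract must satisfy.

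The only thing requiring any care is bookkeeping about the direction of the inequalities and the role of ties: the auxiliary types are sorted so that $\upsilon$ is non-decreasing, Lemma~\ref{lemma:monotonic} gives a weak inequality $\theta_i\le\theta_{i'}$ when the marginal cost is strictly larger, and Lemma~\ref{lemma:feasible} is stated with strict inequalities, so I would phrase the theorem consistently (strict-vs-weak) and note that equal-cost types may be assigned equal coverage and equal reward without loss of generality. There is no genuine obstacle here --- the ``hard part,'' such as it is, was already discharged in the proofs of the two lemmas (in particular the contradiction argument via the fundamental theorem of calculus in Lemma~\ref{lemma:monotonic}); Theorem~\ref{monotonicitycondition} is just their corollary, and I would present its proof as a one- or two-line deduction citing Lemmas~\ref{lemma:feasible} and \ref{lemma:monotonic}.
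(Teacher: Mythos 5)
Your proposal is correct and matches the paper's treatment exactly: the paper also presents Theorem~\ref{monotonicitycondition} as an immediate corollary of Lemmas~\ref{lemma:feasible} and \ref{lemma:monotonic}, using Lemma~\ref{lemma:monotonic} to order the coverage levels $\theta_i$ along the auxiliary-type ordering and Lemma~\ref{lemma:feasible} to transfer that ordering to the rewards $\tilde{R}_i$, with no additional argument. Your remark about reconciling the strict inequalities of Lemma~\ref{lemma:feasible} with the weak inequalities in the theorem (to accommodate ties between equal-cost types) is a point the paper glosses over, but it does not change the substance.
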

\begin{equation}
\left\{\begin{array}{c}\theta_{1} \geq \theta_{2} \geq \cdots \geq \theta_{i} \geq \cdots \geq \theta_{YZ} \\ 
\tilde{R}_{1} \geq \tilde{R}_{2} \geq \cdots \geq \tilde{R}_{i} \geq \cdots \geq \tilde{R}_{YZ}\end{array}\right.
\numberthis
\end{equation}

Next, we further relax the IR and IC constraints. Due to the independence of $\Phi_i$ on the contract item $\{ \theta, \tilde{R}\}$, i.e., $\Phi_i(\theta,\tilde{R}) = \Phi_i(\theta^\prime,\tilde{R}^\prime)$, $\theta \neq \theta^\prime, \tilde{R} \neq \tilde{R}^\prime$, the UAV type does not change with the node coverage and contract rewards. In addition, the ordering of the type by marginal costs does not change with the subregion $n$. As such, we are able to deduce the minimum utility UAV $\Phi_{max}$, i.e., the UAV type that incurs the highest marginal cost of node coverage, as follows:
 \begin{flalign}
\Phi_{max}(\theta)=\arg \min _{\Phi_{i}} \tilde{u} \left(\theta,\tilde{R},\Phi_i \right).
 \end{flalign}
Intuitively $\Phi_{max}= \Phi_{YZ}$, i.e., the UAV characterized by $\{\alpha_Y,\beta_Z\}$ is the UAV which incurs the highest marginal cost of node coverage, and hence it is the minimum utility UAV.

\begin{lemma} 
\label{lemma:reduceir}
\normalfont
If the IR constraint of the minimum utility UAV type $\Phi_{YZ}$ is satisfied, the other IR constraints will also hold.
\end{lemma}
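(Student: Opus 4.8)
The plan is to propagate the single IR constraint of the worst type $\Phi_{YZ}$ to every other type by invoking the IC constraints, in the standard fashion for contract-feasibility proofs. First I would fix an arbitrary auxiliary type $\Phi_i$ with $i \neq YZ$ and write down its IC constraint evaluated against the contract item $\omega_{YZ}$ meant for type $\Phi_{YZ}$, namely $\tilde{u}_i(\omega_i) = \eta(\theta_i,\Phi_i) + \tilde{R}_i \geq \eta(\theta_{YZ},\Phi_i) + \tilde{R}_{YZ}$. It then suffices to show that the right-hand side is at least $\eta(\theta_{YZ},\Phi_{YZ}) + \tilde{R}_{YZ} = \tilde{u}_{YZ}(\omega_{YZ})$, which is non-negative by hypothesis; this reduces the whole lemma to the single comparison $\eta(\theta_{YZ},\Phi_i) \geq \eta(\theta_{YZ},\Phi_{YZ})$.

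The key step is to establish that comparison from the ordering of marginal costs. I would argue exactly as in the proof of Lemma \ref{lemma:monotonic}: since covering zero nodes incurs no sensing or computation cost, $\eta(0,\Phi) = 0$ for every type, so the fundamental theorem of calculus together with $\partial \eta(\theta,\Phi)/\partial\theta = -\upsilon(\theta,\Phi)$ gives
\begin{flalign*}
\eta(\theta_{YZ},\Phi_i) - \eta(\theta_{YZ},\Phi_{YZ}) &= \int_{0}^{\theta_{YZ}} \left[ \frac{\partial \eta(\theta,\Phi_i)}{\partial \theta} - \frac{\partial \eta(\theta,\Phi_{YZ})}{\partial \theta} \right] d\theta \\
&= \int_{0}^{\theta_{YZ}} \left[ \upsilon(\theta,\Phi_{YZ}) - \upsilon(\theta,\Phi_i) \right] d\theta \geq 0,
\end{flalign*}
because $\Phi_{YZ} = \Phi_{max}$ is by construction the type with the largest marginal cost of node coverage and $\theta_{YZ} \geq 0$. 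Chaining this with the IC inequality above yields $\tilde{u}_i(\omega_i) \geq \eta(\theta_{YZ},\Phi_{YZ}) + \tilde{R}_{YZ} = \tilde{u}_{YZ}(\omega_{YZ}) \geq 0$, which is precisely the IR constraint of type $\Phi_i$. Since nothing in the argument depends on the particular subregion, it holds uniformly over $n \in \mathcal{N}$, consistent with dropping the $n$ superscript.

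The main obstacle I anticipate is not analytical difficulty but bookkeeping: one must be careful that $\Phi_{max} = \Phi_{YZ}$ really is the minimizer of $\tilde{u}$, that the IC constraint invoked is the correct one (type $\Phi_i$ not wanting to mimic $\Phi_{YZ}$, rather than the reverse), and that the monotonicity $\theta_1 \geq \cdots \geq \theta_{YZ}$ and $\tilde{R}_1 \geq \cdots \geq \tilde{R}_{YZ}$ from Theorem \ref{monotonicitycondition} is what makes this choice of IC direction consistent. Once those ingredients are lined up, the chain of inequalities closes immediately.
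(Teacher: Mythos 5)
Your proof is correct and follows essentially the same route as the paper's: both invoke the IC constraint of type $\Phi_i$ against the contract item $\omega_{YZ}$ and then chain it with the comparison $\eta(\theta_{YZ},\Phi_i) \geq \eta(\theta_{YZ},\Phi_{YZ})$ to land on the single IR constraint of the worst type. The only difference is that you justify that middle inequality explicitly via the integral/marginal-cost ordering argument, whereas the paper simply asserts it from the sorting order; your version is, if anything, the more careful one.
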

\begin{proof}
From the IC constraint and the sorting order $\eta_1(\theta_1) \geq \cdots \geq \eta_i(\theta_i) \cdots \geq \eta_{YZ}(\theta_{YZ})$, we have the following relation:
\begin{flalign*}
\eta_{i}(\theta_{i}) + \tilde{R}_{i} \geq \eta_{i}(\theta_{YZ}) + \tilde{R}_{YZ} \geq  \eta_{YZ}(\theta_{YZ}) + \tilde{R}_{YZ} \geq 0.
\end{flalign*}
As such, as long as the IR constraint of the UAV  type $\Phi_{YZ}$ is satisfied, it follows that the IR constraints of the other UAVs will also hold.
\end{proof}

\begin{lemma}
\label{lemma:reduceic}
\normalfont
For a feasible contract, if $\omega_{i-1} \stackrel{P I C}{\Leftrightarrow} \omega_{i} $ and $ \omega_{i} \stackrel{P I C}{\Leftrightarrow} \omega_{i+1}, $ then $\omega_{i-1} \stackrel{P I C}{\Leftrightarrow} \omega_{i+1}$.
\end{lemma}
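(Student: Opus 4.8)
The plan is to establish transitivity of the pairwise incentive-compatibility relation by expanding the four pairwise-IC inequalities carried by the two hypotheses and recombining them into the two inequalities that constitute $\omega_{i-1} \stackrel{P I C}{\Leftrightarrow} \omega_{i+1}$: that type $\Phi_{i-1}$ weakly prefers $\omega_{i-1}$ to $\omega_{i+1}$, and that type $\Phi_{i+1}$ weakly prefers $\omega_{i+1}$ to $\omega_{i-1}$. First I would unpack $\omega_{i-1} \stackrel{P I C}{\Leftrightarrow} \omega_{i}$ as $\eta_{i-1}(\theta_{i-1}) + \tilde{R}_{i-1} \ge \eta_{i-1}(\theta_i) + \tilde{R}_i$ and $\eta_i(\theta_i) + \tilde{R}_i \ge \eta_i(\theta_{i-1}) + \tilde{R}_{i-1}$, and $\omega_{i} \stackrel{P I C}{\Leftrightarrow} \omega_{i+1}$ analogously. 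I would also record from Theorem~\ref{monotonicitycondition} that $\theta_{i-1} \ge \theta_i \ge \theta_{i+1}$, and from the sorting order that $\upsilon(\theta,\Phi_{i-1}) \le \upsilon(\theta,\Phi_i) \le \upsilon(\theta,\Phi_{i+1})$ for every $\theta$ --- the single-crossing fact that does the real work.

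For the ``downward'' inequality I would chain $\eta_{i-1}(\theta_{i-1}) + \tilde{R}_{i-1} \ge \eta_{i-1}(\theta_i) + \tilde{R}_i$ with the intermediate claim $\eta_{i-1}(\theta_i) + \tilde{R}_i \ge \eta_{i-1}(\theta_{i+1}) + \tilde{R}_{i+1}$. Rearranging the type-$i$ constraint $\eta_i(\theta_i) + \tilde{R}_i \ge \eta_i(\theta_{i+1}) + \tilde{R}_{i+1}$ gives $\tilde{R}_{i+1} - \tilde{R}_i \le \eta_i(\theta_i) - \eta_i(\theta_{i+1})$, so the intermediate claim reduces to $\eta_{i-1}(\theta_i) - \eta_{i-1}(\theta_{i+1}) \ge \eta_i(\theta_i) - \eta_i(\theta_{i+1})$. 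Writing each side via the fundamental theorem of calculus exactly as in the proof of Lemma~\ref{lemma:monotonic}, this becomes $-\int_{\theta_{i+1}}^{\theta_i}\upsilon(\theta,\Phi_{i-1})\,d\theta \ge -\int_{\theta_{i+1}}^{\theta_i}\upsilon(\theta,\Phi_i)\,d\theta$, which holds because $\theta_i \ge \theta_{i+1}$ makes the integration interval correctly oriented and $\upsilon(\theta,\Phi_{i-1}) \le \upsilon(\theta,\Phi_i)$ pointwise. The ``upward'' inequality is symmetric: I would start from $\eta_{i+1}(\theta_{i+1}) + \tilde{R}_{i+1} \ge \eta_{i+1}(\theta_i) + \tilde{R}_i$, insert the step $\eta_{i+1}(\theta_i) + \tilde{R}_i \ge \eta_{i+1}(\theta_{i-1}) + \tilde{R}_{i-1}$, use the type-$i$ constraint $\tilde{R}_i - \tilde{R}_{i-1} \ge \eta_i(\theta_{i-1}) - \eta_i(\theta_i)$ from the first hypothesis, and reduce to $\eta_i(\theta_{i-1}) - \eta_i(\theta_i) \ge \eta_{i+1}(\theta_{i-1}) - \eta_{i+1}(\theta_i)$, which is again the fundamental-theorem-of-calculus estimate with $\theta_{i-1} \ge \theta_i$ and $\upsilon(\theta,\Phi_i) \le \upsilon(\theta,\Phi_{i+1})$.

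The routine part is the algebra of chaining and rearranging the inequalities and the bookkeeping of which of the four hypothesis constraints feeds which step; the only delicate point --- and the one I would write out most carefully --- is keeping the orientation of the integrals and the sign flip $\partial\eta/\partial\theta = -\upsilon$ consistent, so that the monotone ordering of $\upsilon$ in the type index yields the inequality in the direction needed. Once the single-crossing estimate is in hand in both directions, concatenating it with the two ``outer'' IC inequalities gives $\omega_{i-1} \stackrel{P I C}{\Leftrightarrow} \omega_{i+1}$ immediately, which completes the proof; the same argument then extends inductively so that local (adjacent) pairwise IC implies global pairwise IC.
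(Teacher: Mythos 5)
Your proposal is correct and follows essentially the same route as the paper's proof: both take the adjacent (local) IC constraints, use the fundamental theorem of calculus on $\partial\eta/\partial\theta=-\upsilon$ together with the monotone ordering of $\upsilon$ in the type index and $\theta_i\geq\theta_{i+1}$ to obtain the single-crossing estimate, and then combine it with the outer IC inequalities to get both directions of $\omega_{i-1}\stackrel{PIC}{\Leftrightarrow}\omega_{i+1}$. Your chaining through the intermediate term $\eta_{i-1}(\theta_i)+\tilde{R}_i$ is just an algebraic rearrangement of the paper's step of adding the LUIC/LDIC inequalities to the integral estimate, and your explicit appeal to $\theta_{i-1}\geq\theta_i\geq\theta_{i+1}$ makes the orientation of the integrals slightly more careful than the paper's writeup.
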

Note that the relation $\omega_{i} \stackrel{P I C}{\Leftrightarrow} \omega_{i'}$, $i \neq i'$ implies the \textit{Pairwise Incentive Compatibility} (PIC), which is fulfilled under the following condition:
\begin{flalign*}
\left\{\begin{array}{l}
u_{i}\left(\omega_{i}\right) \geq u_{i}\left(\omega_{i'}\right), \\
u_{i'}\left(\omega_{i'}\right) \geq u_{i'}\left(\omega_{i}\right).
\end{array}\right.
\end{flalign*}
\begin{proof}
Suppose we have three UAV types $\Phi_{i-1}$, $\Phi_{i}$, and $\Phi_{i+1}$ where $i-1<i<i+1$. The Local Upward Incentive Constraint (LUIC), i.e., IC constraint between the $i^{th}$ and $({i+1})^{th}$ UAV is as follows:
\begin{flalign*}
\eta\left(\theta_{i-1}, \Phi_{i-1}\right)+\tilde{R}_{i-1} \geq \eta\left(\theta_{i}, \Phi_{i-1}\right)+\tilde{R}_{i} \\
\eta\left(\theta_{i}, \Phi_{i}\right)+\tilde{R}_{i} \geq \eta\left(\theta_{i+1}, \Phi_{i}\right)+\tilde{R}_{i+1}.
\numberthis
\label{eqn:luic}
\end{flalign*}

In addition, we consider:
\begin{flalign*}
& \left[\eta\left(\theta_{i+1}, \Phi_{i}\right)-\eta\left(\theta_{i}, \Phi_{i}\right)\right]-\left[\eta\left(\theta_{i+1}, \Phi_{i-1}\right)-\eta\left(\theta_{i}, \Phi_{i-1}\right)\right] \\
=& \int_{\theta_{i}}^{\theta_{i+1}} \frac{\partial \eta\left(\theta, \Phi_{i}\right)}{\partial \theta} d \theta-\int_{\theta_{i}}^{\theta_{i+1}} \frac{\partial \eta\left(\theta, \Phi_{i-1}\right)}{\partial \theta} d \theta \\
= & \int_{\theta_{i}}^{\theta_{i+1}} \frac{\partial \eta\left(\theta, \Phi_{i}\right)}{\partial \theta}-\frac{\partial \eta\left(\theta, \Phi_{i-1}\right)}{\partial \theta} d \theta \\
=&-\int_{\theta_{i}}^{\theta_{i+1}}\left[\upsilon\left(\theta, \Phi_{i}\right)-\upsilon\left(\theta, \Phi_{i-1}\right)\right] d \theta\\
=&\int_{\theta_{i+1}}^{\theta_{i}}\left[\upsilon\left(\theta, \Phi_{i}\right)-\upsilon\left(\theta, \Phi_{i-1}\right)\right] d \theta.
\numberthis
\label{eqn:luic2}
\end{flalign*}
Given the order of marginal cost of node coverage in (\ref{eqn:auxiliary}), it follows that (\ref{eqn:luic2}) is positive. As such, we have:
\begin{flalign*}
 \eta\left(\theta_{i+1}, \Phi_{i}\right)-\eta\left(\theta_{i}, \Phi_{i}\right) \geq \eta\left(\theta_{i+1}, \Phi_{i-1}\right)-\eta\left(\theta_{i}, \Phi_{i-1}\right) .
 \numberthis
 \label{eqn:luic3}
\end{flalign*}
Adding the LUIC inequalities presented in (\ref{eqn:luic}) together with that of (\ref{eqn:luic3}), we have:
\begin{flalign*}
\eta(\theta_{i-1},\Phi_{i-1}) + \tilde{R}_{i-1} \geq \eta(\theta_{i+1},\Phi_{i-1}) + \tilde{R}_{i+1}.
\numberthis
\label{eqn:luic4}
\end{flalign*}
By considering the Local Downward Incentive Constraint (LDIC), i.e., IC constraint between the $i^{th}$ and ${(i-1)}^{th}$ UAV, as well as adopting the approach in (\ref{eqn:luic2}), we are able to derive that:
\begin{flalign*}
\eta(\theta_{i+1},\Phi_{i+1}) + \tilde{R}_{i+1} \geq \eta(\theta_{i-1},\Phi_{i+1}) + \tilde{R}_{i-1}.
\numberthis
\label{eqn:luic5}
\end{flalign*}
As such, given that the LUIC and LDIC hold, we have proven that the PIC of the contracts hold, i.e., $\omega_{i-1} \stackrel{P I C}{\Leftrightarrow} \omega_{i+1}$.
\end{proof}

With Lemma \ref{lemma:reduceir}, we are able to reduce $YZ$ IR constraints into a single constraint, i.e., as long as the minimum utility type $\Phi_{YZ}$ UAV has a non-negative utility, it follows that the other IR constraints will hold. Moreover, with Lemma \ref{lemma:reduceic}, we are able to reduce $YZ(YZ-1)$ constraints into $YZ-1$ constraints, i.e., as long as the PIC constraint of the type $\Phi_i$ and type $\Phi_{i+1}$ UAV holds, it follows that the IC constraints between the type $\Phi_i$ and all other UAV types will hold.

With this, we are able to derive a tractable set of sufficient conditions for the feasible contract in Theorem \ref{theorem:sufficient} as follows. The first condition refers to the reduced IR condition corresponding to Lemma (\ref{lemma:reduceir}), whereas the second condition refers to the PIC condition between the type $\Phi_i$ and type $\Phi_{i+1}$ UAV corresponding to Lemma (\ref{lemma:reduceic}).
\begin{theorem}
\normalfont
A feasible contract must meet the following sufficient conditions:
\label{theorem:sufficient}
\end{theorem}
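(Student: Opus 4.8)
The plan is to exhibit the reduced set of constraints and verify that they are genuinely \emph{sufficient} to recover all of the original $YZ$ IR constraints and all $YZ(YZ-1)$ IC constraints, after which the claimed conditions can be read off. I would state the two conditions as follows: first, the minimum–utility type is individually rational,
\begin{flalign*}
\eta_{YZ}(\theta_{YZ}) + \tilde{R}_{YZ} \geq 0;
\end{flalign*}
second, pairwise incentive compatibility holds between every pair of \emph{adjacent} auxiliary types,
\begin{flalign*}
\left\{\begin{array}{l}
\eta(\theta_{i},\Phi_{i}) + \tilde{R}_{i} \geq \eta(\theta_{i+1},\Phi_{i}) + \tilde{R}_{i+1},\\
\eta(\theta_{i+1},\Phi_{i+1}) + \tilde{R}_{i+1} \geq \eta(\theta_{i},\Phi_{i+1}) + \tilde{R}_{i},
\end{array}\right.
\end{flalign*}
for $1 \leq i \leq YZ-1$, together with the monotonicity ordering $\theta_1 \geq \cdots \geq \theta_{YZ}$ and $\tilde{R}_1 \geq \cdots \geq \tilde{R}_{YZ}$ already established in Theorem \ref{monotonicitycondition}.

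The first step is to argue the IR reduction: invoking Lemma \ref{lemma:reduceir}, I would note that the chain $\eta_i(\theta_i) + \tilde{R}_i \geq \eta_i(\theta_{YZ}) + \tilde{R}_{YZ} \geq \eta_{YZ}(\theta_{YZ}) + \tilde{R}_{YZ} \geq 0$ shows every type-$\Phi_i$ UAV inherits non-negative utility from the single constraint on $\Phi_{YZ}$. The second step is the IC reduction: by Lemma \ref{lemma:reduceic}, adjacent-type PIC propagates transitively, so PIC between $\Phi_i$ and $\Phi_{i+1}$ for all $i$ implies PIC between $\Phi_i$ and $\Phi_{i'}$ for any $i \neq i'$; since PIC between every pair is exactly the full set of IC constraints, all $YZ(YZ-1)$ of them follow. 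Combining these two reductions, the listed conditions imply both Definitions \ref{def:ir} and \ref{def:ic}, so any contract satisfying them is feasible.

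The main obstacle is making the transitivity argument of Lemma \ref{lemma:reduceic} do the full work rigorously: Lemma \ref{lemma:reduceic} as stated chains PIC across three consecutive indices, so I would need to iterate it to bridge an arbitrary gap $|i - i'|$, and in doing so I must keep track of the fact that the single-crossing property (\ref{eqn:auxiliary}) on the marginal cost $\upsilon$ is what guarantees each intermediate integral term has the right sign — this is the same mechanism used in (\ref{eqn:luic2})–(\ref{eqn:luic5}). I would also need to confirm that the monotonicity conclusions of Theorem \ref{monotonicitycondition} are consistent with, and not independent extra assumptions beyond, the adjacent-PIC conditions; in fact monotonicity is \emph{implied} by feasibility (Lemmas \ref{lemma:feasible} and \ref{lemma:monotonic}), so I would present it as a derived consequence rather than a separate hypothesis. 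Once these bookkeeping points are settled, writing out the explicit system of conditions that constitutes the statement of the theorem is routine.
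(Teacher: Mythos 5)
Your proposal is correct and follows essentially the same route as the paper: the sufficient conditions are obtained by reducing the $YZ$ IR constraints to the single constraint on the highest-marginal-cost type $\Phi_{YZ}$ via Lemma \ref{lemma:reduceir}, and reducing the $YZ(YZ-1)$ IC constraints to the $YZ-1$ adjacent-pair PIC constraints via the transitivity in Lemma \ref{lemma:reduceic}; your two-inequality form of the adjacent PIC condition rearranges exactly to the sandwich inequality stated in the theorem. Your additional observations — that Lemma \ref{lemma:reduceic} must be iterated to bridge arbitrary index gaps, and that monotonicity is a derived consequence of feasibility rather than an independent hypothesis — are sound refinements of the same argument rather than a different approach.
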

\begin{enumerate}
\item $\eta_{YZ}(\theta_{YZ}) + \tilde{R}_{YZ} \geq 0$
\item $\tilde{R}_{i+1}+ \eta(\theta_{i+1},\Phi_{i+1}) - \eta(\theta_{i},\Phi_{i+1})  \geq  \tilde{R}_{i}  \geq \tilde{R}_{i+1} + \eta(\theta_{i+1},\Phi_{i})-\eta(\theta_{i},\Phi_{i}) $.
\end{enumerate}

\subsection{Contract Optimality}
\label{sec:contractoptimality}

To solve for the optimal contract rewards $\tilde{R}^*_i$, we first establish the dependence of optimal contract rewards $\mathbf{{R}}$ on route coverage $\theta$. Thereafter, we solve the problem in (\ref{eqn:contractoptimization}) with $\boldsymbol{\theta}$ only. Specifically, we obtain the optimal rewards $R^*(\boldsymbol{\theta})$ given a set of feasible node coverages from each UAV which satisfies the monotonicity constraint $\theta_{1} \geq \theta_{2} \geq \cdots \geq \theta_{i} \geq \cdots \geq \theta_{YZ}$. 

In addition, the multi-dimensional contract formulation that we have  thus far only considers the self-revelation for two types, i.e., sensing and computation costs. To account for traversal and transmission cost types, we add an additional fixed compensation $\hat{R}$ into the contract rewards.  The traversal cost can be derived from the historical information of the UAV, and can be calibrated based on the response that the model owner receives. In the following theorem, we prove that the addition of a fixed rewards compensation does not violate the IC constraints, i.e., the self-revealing properties of the contract is still preserved, whereas it is inconsequential even if the IR constraint is violated, given that $\tilde{R}$ has already been designed to sufficiently compensate marginal costs, and only one optimal UAV is required to serve each subregion. The optimal rewarding scheme is summarized as follows.

\begin{theorem}
\label{contracttheorem}
\normalfont
For a known set of node coverage $\boldsymbol{\theta}$ satisfying $\theta_{1} \geq \theta_{2} \geq \cdots \geq \theta_{i} \geq \cdots \geq \theta_{YZ}$ in a feasible contract, the optimal reward is given by:
\begin{flalign*} 
R_{i}^{*}=\left\{\begin{array}{c}{\hat{R}-\eta(\theta_{i},\Phi_{i}),   \quad \quad\quad\quad \quad\quad\quad \quad \quad \text { \textit{if} } i= YZ,} \\ 
 {\hat{R}+\tilde{R}_{i+1} + \eta(\theta_{i+1},\Phi_{i})-\eta(\theta_{i},\Phi_{i}),  \text { \textit{otherwise.}}}\end{array}\right.
		\numberthis
\label{eqn:contractheorem}
\end{flalign*}
\end{theorem}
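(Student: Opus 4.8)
The plan is to turn the menu-design problem into a componentwise reward-minimization. Once the node-coverage profile $\boldsymbol{\theta}$ has been fixed and is feasible (so that $\theta_1 \geq \cdots \geq \theta_{YZ}$ by Theorem \ref{monotonicitycondition}), the accuracy term $\Upsilon(\cdot)$ in (\ref{eqn:modelownerutility}) and the exogenous constant $\hat{R}$ are all determined; since the model owner's profit is decreasing in the reward paid to whichever UAV is eventually selected for a subregion, and a priori any type may be the selected one, profit maximization reduces to minimizing each $\tilde{R}_i$ individually, subject only to the \emph{reduced} feasibility conditions of Theorem \ref{theorem:sufficient}. I would therefore build the optimal reward vector by backward recursion on the auxiliary index $i = YZ, YZ-1, \ldots, 1$ and then verify that the resulting vector is genuinely feasible.

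\emph{Anchoring and propagating the recursion.} The first condition of Theorem \ref{theorem:sufficient} (the reduced IR of the minimum-utility type $\Phi_{YZ}$) is $\eta_{YZ}(\theta_{YZ}) + \tilde{R}_{YZ} \geq 0$; since lowering $\tilde{R}_{YZ}$ can only increase $\Pi$, optimality forces equality, giving $\tilde{R}^*_{YZ} = -\eta(\theta_{YZ},\Phi_{YZ})$ and hence $R^*_{YZ} = \hat{R} - \eta(\theta_{YZ},\Phi_{YZ})$, the first branch of (\ref{eqn:contractheorem}). For $i < YZ$, suppose $\tilde{R}^*_{i+1}$ has been fixed; the left inequality in the second (PIC) condition of Theorem \ref{theorem:sufficient} reads $\tilde{R}_i \geq \tilde{R}_{i+1} + \eta(\theta_{i+1},\Phi_i) - \eta(\theta_i,\Phi_i)$, whose right-hand side does not involve $\tilde{R}_i$, so minimality again forces equality and yields the second branch. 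A one-line downward induction then shows that \emph{every} feasible reward vector dominates this one termwise — the base case is the IR above and the inductive step is the same lower bound — so $\{R^*_i\}$ is the profit-maximizing choice, provided its own feasibility is confirmed.

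\emph{Feasibility of the candidate, and the main obstacle.} The recursion used only the binding inequalities, so it remains to check that $\{\tilde{R}^*_i\}$ also satisfies the \emph{non-binding} constraints of Theorem \ref{theorem:sufficient}; this is the one place where a genuine inequality, rather than a substitution, must be established. Cancelling $\tilde{R}^*_{i+1}$, the right (upper-bound) inequality of the PIC condition reduces to $\big[\eta(\theta_{i+1},\Phi_{i+1}) - \eta(\theta_i,\Phi_{i+1})\big] - \big[\eta(\theta_{i+1},\Phi_i) - \eta(\theta_i,\Phi_i)\big] \geq 0$; rewriting the bracketed differences with the fundamental theorem of calculus exactly as in (\ref{eqn:fund}), this becomes $\int_{\theta_{i+1}}^{\theta_i}\big[\upsilon(\theta,\Phi_{i+1}) - \upsilon(\theta,\Phi_i)\big]\,d\theta \geq 0$, which holds because $\theta_i \geq \theta_{i+1}$ by Theorem \ref{monotonicitycondition} and $\upsilon(\theta,\Phi_{i+1}) \geq \upsilon(\theta,\Phi_i)$ by the ascending order of the auxiliary types by marginal cost of node coverage. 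The same identity gives $\tilde{R}^*_i - \tilde{R}^*_{i+1} = \int_{\theta_{i+1}}^{\theta_i}\upsilon(\theta,\Phi_i)\,d\theta \geq 0$, so the reward monotonicity of Theorem \ref{monotonicitycondition} is recovered, and telescoping the recursion upward from $\tilde{R}^*_{YZ}$ (where the IR binds) shows $\tilde{u}_i(\omega_i) \geq 0$ for all $i$, as in Lemma \ref{lemma:reduceir}. Hence $\{\tilde{R}^*_i\}$ meets all conditions of Theorem \ref{theorem:sufficient} and is feasible.

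\emph{Harmlessness of the fixed compensation.} Finally, adding the common constant $\hat{R}$ to every contract item preserves the self-revealing property: for any types $\Phi_i, \Phi_{i'}$ the difference $u_i(\omega_i) - u_i(\omega_{i'})$ is unchanged, since the $\hat{R}$ terms cancel and so do the traversal/transmission outlays $\phi(\psi_i + \zeta_i)$, which depend on the reporting UAV but not on the chosen item, leaving $u_i(\omega_i) - u_i(\omega_{i'}) = \tilde{u}_i(\omega_i) - \tilde{u}_i(\omega_{i'}) \geq 0$ by the PIC chain established through Lemma \ref{lemma:reduceic}; so all IC constraints still hold. Because $\hat{R}$ only raises every type's utility, at worst it under-covers the traversal/transmission cost of some \emph{unselected} type, but since exactly one UAV is matched per subregion and $\tilde{R}$ already compensates the marginal cost of node coverage, such an IR violation does not affect the model owner's optimum, which completes the argument.
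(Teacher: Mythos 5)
Your proof is correct and follows essentially the same route as the paper: anchor the recursion by binding the reduced IR constraint of type $\Phi_{YZ}$, bind the lower PIC inequality of Theorem \ref{theorem:sufficient} at each step (your direct downward induction is just the contrapositive of the paper's proof by contradiction), and then argue that the common constant $\hat{R}$ cancels in every IC comparison while any residual IR violation from traversal/transmission costs is immaterial because only one UAV is selected per subregion. The one genuine addition is your verification that the constructed $\{\tilde{R}^*_i\}$ also satisfies the \emph{non-binding} upper PIC inequality, via $\int_{\theta_{i+1}}^{\theta_i}\bigl[\upsilon(\theta,\Phi_{i+1})-\upsilon(\theta,\Phi_i)\bigr]\,d\theta \geq 0$; the paper's proof omits this check (it only shows no cheaper feasible vector exists, not that the candidate itself is feasible), so your argument is slightly more complete on that point.
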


\begin{proof}
There are two parts to the proof. Firstly, we prove that the reward design for the two-dimensional contract is optimal. Therefore, we adopt the proof by contradiction. We first assume there exists some $\mathbf{R}^\dagger$ that yields greater profit for the model owner, meaning that the theorem is incorrect, i.e., $\Pi(R^\dagger)>\Pi(R^*)$. For simplicity, we need to consider only the rewards portion of the model owner's profit function in this proof, i.e., $\sum^{YZ}_{i=1}R^\dagger_i<\sum^{YZ}_{i=1}R^*_i$. This implies there exists at least a $t\in \{1,2, \ldots, YZ \}$ that satisfies the inequality $R^\dagger_t < R^*_t$. 

According to the PIC constraint of Lemma (\ref{lemma:reduceic}), we have:
\begin{flalign}
R_{t}^{\dagger} \geq R_{t+1}^{\dagger}+\eta\left(\theta_{t+1}, \Phi_{t+1}\right)-\eta\left(\theta_{t}, \Phi_{t}\right).
\label{lemma:reduce1}
\end{flalign}
In contrast from Theorem 3, we have:
\begin{flalign}
R_{t}^{*} = R_{t+1}^{*}+\eta\left(\theta_{t+1}, \Phi_{t+1}\right)-\eta\left(\theta_{t}, \Phi_{t}\right).
\label{lemma:reduce2}
\end{flalign}
From (\ref{lemma:reduce1}) and  (\ref{lemma:reduce2}), we can deduce that $R^\dagger_{t+1} < R^*_{t+1}$. Continuing the process up to $t=YZ$, we $R^\dagger_{YZ} \leq R^*_{YZ} = -\eta(\theta_{i},\Phi_{i}) $, which violates the IR constraint. As such, there does not exist the rewards $\mathbf{R}^\dagger$ in the feasible contract that yields greater profit for the model owner. Intuitively, the model owner chooses the lowest reward that satisfies the IR and IC constraints for profit maximization. 

Secondly, we show that adding a fixed traversal cost reward does not violate the IC constraint. Within a subregion, when we consider the complete utility function of the auxiliary UAV with type $i$, $i \neq i'$, 
\begin{flalign*}
\eta\left(\theta_{i}, \Phi_{i}\right)+\tilde{R}_{i} +\hat{R} - & \psi^n_i -\zeta^n_i \geq \\
& \eta\left(\theta_{i'}, \Phi_{i}\right)+\tilde{R}_{i} +\hat{R} - \psi_i^n- \zeta^n_i .
\numberthis
\end{flalign*}
Intuitively, the traversal and transmission cost are structurally separate from the marginal costs, i.e., sensing and computation cost types of the UAV within a subregion $n$. As such, the fixed reward terms cancel out and the self-revealing properties of the contract is preserved.

Note that the IR constraint may no longer hold for some $i$ where
\begin{flalign}
 \psi_i > \eta\left(\theta_{i}, \Phi_{i}\right)+\tilde{R}_{i} +\hat{R}^n .
\numberthis
\label{eqn:luic}
\end{flalign}
However, this is inconsequential given that unlike the conventional contract theoretic formulations, we only require a type of UAV to serve a subregion. Moreover, $\tilde{R}$ is already designed such that the  IR constraints hold to compensate marginal costs sufficiently. For certain subregions, $\hat{R}$ can be calibrated upwards if  no UAV responds to the model owner.

\end{proof}

Following (\ref{eqn:contractheorem}), we can re-express the optimal rewards as:
\begin{flalign}
R_{i}^{*}=\hat{R} -\eta(\theta_{YZ},\Phi_{YZ}) +\sum_{t=i}^{YZ} \Delta_{t},
\end{flalign}
where $\Delta_{YZ} = 0 $,  $\Delta_t = \eta(\theta_{i+1},\Phi_{i})-\eta(\theta_{i},\Phi_{i})$, and $t= 1,2, \ldots,YZ-1$.

Unlike conventional contract theoretic formulations, the model owner only requires a single contract per subregion $n$ for the optimal UAV. From (\ref{eqn:modelownerutility}), we can deduce that the optimal type-${i^*}$ UAV  to serve each subregion is $i^* = \arg \max_{\Phi_i \in \boldsymbol{\Phi}}\Pi(\Omega \{ \mathcal{A},\mathcal{B} \})=\arg \min_{\Phi_i \in \boldsymbol{\Phi}}\Phi_i  = \Phi_1$. Intuitively, for each region, the model owner leverages on the self-revealing properties of the multi-dimensional contract formulation to obtain an optimal UAV with the lowest marginal cost of node coverage for profit maximization. In other words, this is the UAV that can cover the largest proportion of the subregion at the lowest cost, among all feasible UAVs that can complete the task. We can substitute the optimal rewards into the profit function of the model owner and rewrite the profit maximization problem as follows: 
\begin{flalign*}
&\max_{(R,\theta^n_{i^*})}  \Pi(\Omega^n) =  \sum^N_{n=1} G^n (\theta^n_{1}), \\
& \textrm{s.t.} \\
& C1: \text{ } \theta^n_{1} \geq \theta^n_{2} \geq \cdots \geq \theta^n_{i} \geq \cdots \geq \theta^n_{YZ},\\
& C2: \text{ } 0 \leq \theta^n_{1} \leq 1,
\numberthis
\label{contracteqnfin}
\end{flalign*}
where: 
\begin{flalign*}
G^n &=  \frac{\sigma }{N} log(1+ \mu \theta^n_1 D^n)- R^*_1 \\
&=   \frac{\sigma }{N} log(1+ \mu \theta^n_1 D^n ) - \hat{R}- \phi( \alpha^n_1 \theta^n_{1} + \beta_1 \theta^n_{1} ) . 
\end{flalign*}

Note that $C1$ refers to the monotonicity constraint of the contract whereas $C2$ specifies the upper and lower bounds of the route coverage. Taking the first order derivative of  (\ref{contracteqnfin}) to solve for $\theta^{n*}_{1}$, we can derive the closed form solution of the optimal node coverage-contract reward pair as follows:
\begin{flalign*} 
\left\{\begin{array}{c}{\theta^{n*}_{1} = \frac{1}{\mu D^n} \left( \frac{\sigma}{N \phi(\alpha_1+\beta_1)} -1   \right)  ,   } \\ 
\\
{R_{i}^{*}=\hat{R} -\eta(\theta_{YZ},\Phi_{YZ}) +\sum_{t=i}^{YZ} \Delta_{t}}.\end{array}\right .
		\numberthis
\label{eqn:optimalcontractspecs}
\end{flalign*}
Following (\ref{eqn:optimalcontractspecs}), we are able to derive an optimal node coverage-contract reward pair upon the announcement of the UAV types. If the derived contract pairs satisfy the monotonicity conditions, they are the optimal contract formulation. Otherwise, we use the iterative adjusted algorithm, i.e., ``Bunching and Ironing''  algorithm \cite{gao2011spectrum}, to return the results that satisfy the monotonicity constraint.

Note that to derive the optimal rewards for any type-$i^*$ UAV, it is necessary to obtain the rewards for  type-${i^*+1}, \ldots, $ type-$YZ$ UAV. However, since there can only be one selected UAV for each subregion, the reward for the ${(i^*+1)}^{th}$ UAV and onward has to be zero in practice. As such, to compute the rewards, we first assume the type-$YZ$ UAV is selected to obtain a hypothetical $R_{YZ}$. With this, we can then work from backwards, towards deriving $R_{YZ-1}, \ldots , R_{i^*}$. Thereafter, the model owner specifies route assignment $\mathcal{A}^{1, n}$ for the optimal UAV such that $\theta^{n*}_1 =  \frac{\sum_{i' \neq i, i' \in \mathcal{I}_n}a^{1,n}_{i,i'}}{|\mathcal{I}_n|}$. 

In the next section, we consider the UAV-subregion assignment.

\section{UAV-Subregion Assignment}
\label{sec:matching}

In this section, we consider the matching-based UAV-subregion assignment using the GS algorithm. In Section \ref{sec:contract}, we note that the optimal UAV for each subregion has to be the lowest UAV type, i.e.,  type-$1$ UAV. However, given that all subregions will prefer type-$1$ UAV, there exist a need to consider a two-side matching such that the optimal UAVs are matched to the subregions efficiently.


\subsection{Matching Rules}

\begin{algorithm}
	\caption{GS Algorithm for UAV-Subregion Assignment}\label{algorithm} 
	\begin{algorithmic}[1]
	
	\State \textbf{Input:} $\mathcal{N}, \bar{\tau_n}, \mathcal{J}, \Psi, \mathcal{A}, \mathcal{B}, \mathcal{C}, E_j$
	
	\State \textbf{Output:} $\mathcal{M}^*(j), \forall j \in \mathcal{J}$
	
\State \underline{Phase I: Initialization}

\If {$\tau_P^{j,n}(\hat\theta) +\tau^{j,n}_C(\hat\theta) + \tau^{j,n}_T \leq \bar{\tau_n}(\hat\theta)$}

	\State UAV $j \in \mathcal{J}$ report types $\alpha^n_j, \beta^n_j$ to subregion $n$
\EndIf

\State Sort $\Phi(\alpha^n_j, \beta^n_j) $ in an ascending order to derive $\mathcal{P}_n$

\State \textbf{Set} $\mathcal{R} = \mathcal{N}$, $\mathcal{M} =\emptyset$

\State \underline{Phase II: Iterative Matching}

\While {$\mathcal{R} \neq \emptyset$ and $\mathcal{P}_n \neq \emptyset$, $\forall n \in \mathcal{R}$}

	\For {$n$ in $\mathcal{R}$}
				\State $j^*= \arg \min_{\Phi_i \in \boldsymbol{\Phi}}\Phi_i $
				\State Formulate $({\theta^{n}_{j^*}, R^{n}_{j^*}})$ and propose contract

		\While {$n$ has more than one optimal UAV}
		
			\State $\tilde{R}^n = \tilde{R}^n - \Delta \tilde{R}^n$

		\EndWhile

		\If{ $u^n_{j^*}(\theta^{n}_{j^*}) > u^{n^\prime}_{j^*}(\theta^{n^\prime}_{j^*}) $}
				
				\State $\mathcal{M}(j^*) =n$
				\State Remove $n$ from $\mathcal{R}$ and add $n^\prime$ to $\mathcal{R}$

			\Else
			
				\State Remove $j^*$ from $\mathcal{P}_n$
				
			\EndIf

	\EndFor

\EndWhile
			
	\end{algorithmic} 
\end{algorithm}

We introduce a complete, reflexive, and transitive binary preference relation \cite{roth1992two}, i.e., ``$\succ $''  to study the preferences of the UAV. For example,  $n \succ_{j} n^{\prime}$ implies that the UAV $j$ strictly prefers subregion $n$ to $n^\prime$, whereas $n \succeq_{j} n^{\prime}$ indicates that the UAV $j$ prefers the subregion $n$ at least as much as UAV $j^\prime$. We also consider the core definitions as follows:

\begin{definition}
\normalfont
\textit{Matching:} For the formulated matching problem $(\mathcal{J},\mathcal{N},\mathcal{P}_n,\mathcal{P}_j)$, where $\mathcal{J}$ and $\mathcal{N}$ denote the set of UAVs and the set of subregions respectively, whereas  $\mathcal{P}_n$ and $\mathcal{P}_j$ denote the preferences of the subregions and UAVs respectively. The matching $\mathcal{M}(j)=n$ indicates that the UAV $j$ has been matched to subregion $n$, whereas the matching $\mathcal{M}(j)=\emptyset$ implies that the UAV $j$ has not been matched to any subregion.
\end{definition}

\begin{definition}
\normalfont
\textit{Propose Rule:} The UAV $j \in \mathcal{J}$ announces its type to all the eligible subregions based on feasibility of task completion. Then, the contract is formulated and the subregion proposes to its most preferred UAV $j^*$ in its preference set $\mathcal{P}_n$ , i.e., $n^* \succ_{j} n^{*\prime}$, $n^* \forall n \in \mathcal{N}$, $n^* \neq n^{*\prime}$. Note that the preference of the subregion is managed by the FL model owner.
\end{definition}

\begin{definition}
\normalfont
\textit{Reject Rule:} The UAV $j \in \mathcal{J}$ rejects the subregion if a better matching candidate exists. Otherwise, the subregion that is not rejected will be retained as a matching candidate.
\end{definition}


However, given that the subregion preference for UAV type is only based on two of four type dimensions, i.e., marginal costs, some subregions may have multiple preferred UAVs with the same marginal costs. To that end, adopting the approach in \cite{zhou2019computation}, we introduce the rewards calibration rule by adjusting the traversal and transmission compensation downward to further reduce the number of eligible UAVs matched to each subregion.

\begin{definition}
\label{rewardscalibration}
\normalfont
\textit{Rewards Calibration Rule:} For the subregion $n \in \mathcal{N}$ that has more than one optimal UAV matched, the contractual rewards can be adjusted downwards, following which the preference of the UAVs are renewed for another iteration of matching. The adjustment is as follows:
\begin{equation}
\tilde{R}^n = \tilde{R}^n - \Delta \tilde{R}^n.
\numberthis
\end{equation}
 
\end{definition}

\subsection{Matching Implementation and Algorithm}

We now explain the implementation procedure of the UAV-subregion assignment. 

\textbf{Phase 1: Initialization}

\begin{itemize}

\item The model owner announces the sensing subregions and time constraint $\bar{\tau_n}$ for task completion to all UAVs in $\mathcal{J}$. Since $\theta^n_j$ is not known apriori, the time limit for task completion can be a function of the variable $\hat{\theta}$ specified by the model owner, e.g., $\hat{\theta} = 0.8$, i.e., only UAVs that can complete 80\% of the route coverage should announce their types (Line 4 of Algorithm 1).

\item The UAVs announce their types $\alpha^n_j, \beta^n_j$ to the feasible subregions (Line 5), i.e., subregions in which the UAVs are able to meet the requirements for task completion. 

\item We initialize (Lines 6, 7): 
\begin{itemize}
	
	\item $\mathcal{M}$ as an empty set
	\item $\mathcal{R}$ as the set of subregions that have yet to be matched, i.e., $\mathcal{R}= \mathcal{N}$ at initialization
	\item $\mathcal{P}_n$ as the set of subregion preferences based on the ascending order of marginal costs  $\Phi(\alpha^n_j, \beta^n_j) $ as discussed in Section \ref{sec:contractoptimality}.

\end{itemize}

\end{itemize}

\textbf{Phase 2: Iterative Matching}

Each iteration of matching consists of the four stages.

\begin{itemize}

	\item \textit{Proposal:} For each subregion $n \in \mathcal{R}$, the node coverage-contract reward pair is formulated for the most optimal UAV $j^*= \arg \min_{\Phi_i \in \boldsymbol{\Phi}}\Phi_i $ in the preference set $\mathcal{P}_n$. Then, the subregion proposes to an optimal UAV (Lines 11-12). 
	
		\item \textit{Rewards Calibration:} If the subregion has more than one optimal UAV, the contract reward is calibrated downward until a one-to-one matching is achieved (Line 13-14).
	
	\item \textit{Rejection:} If the UAV has a better matching candidate $u^n_{j^*}(\theta^{n}_{j^*}) < u^{n^\prime}_{j^*}(\theta^{n^\prime}_{j^*}) $, the subregion is rejected. If not, the UAV keeps the subregion as a matching candidate.

	\item \textit{Update:} If a subregion has been matched, remove it from $\mathcal{R}$ (Line 17). If a prevailing matching candidate has been rejected, add it back to $\mathcal{R}$ (line 17). Update $\mathcal{P}_n$ by removing the UAV that has issued a rejection in this iteration (Line 19).
	
\end{itemize}

The iterations are repeated until all subregions have been matched, i.e., $\mathcal{R} = \emptyset$, or the remaining subregion has been rejected by all UAVs in its preference list, i.e., $\mathcal{P}_n= \emptyset$. The pseudocode is presented in Algorithm \ref{algorithm}.

The stability and optimality properties of the GS algorithm are ensured following the  proofs in \cite{o2007algorithmic}. As such, the self-revealing properties of our multi-dimensional contract design assures truthful type reporting, whereas the matching algorithm ensures that only one optimal UAV is matched to each region. In the following, we perform the performance evaluation of our incentive design.

\section{Performance Evaluation}
\label{sec:pe}

\begin{figure*}
\centering
\begin{multicols}{3}

    \includegraphics[width=\columnwidth]{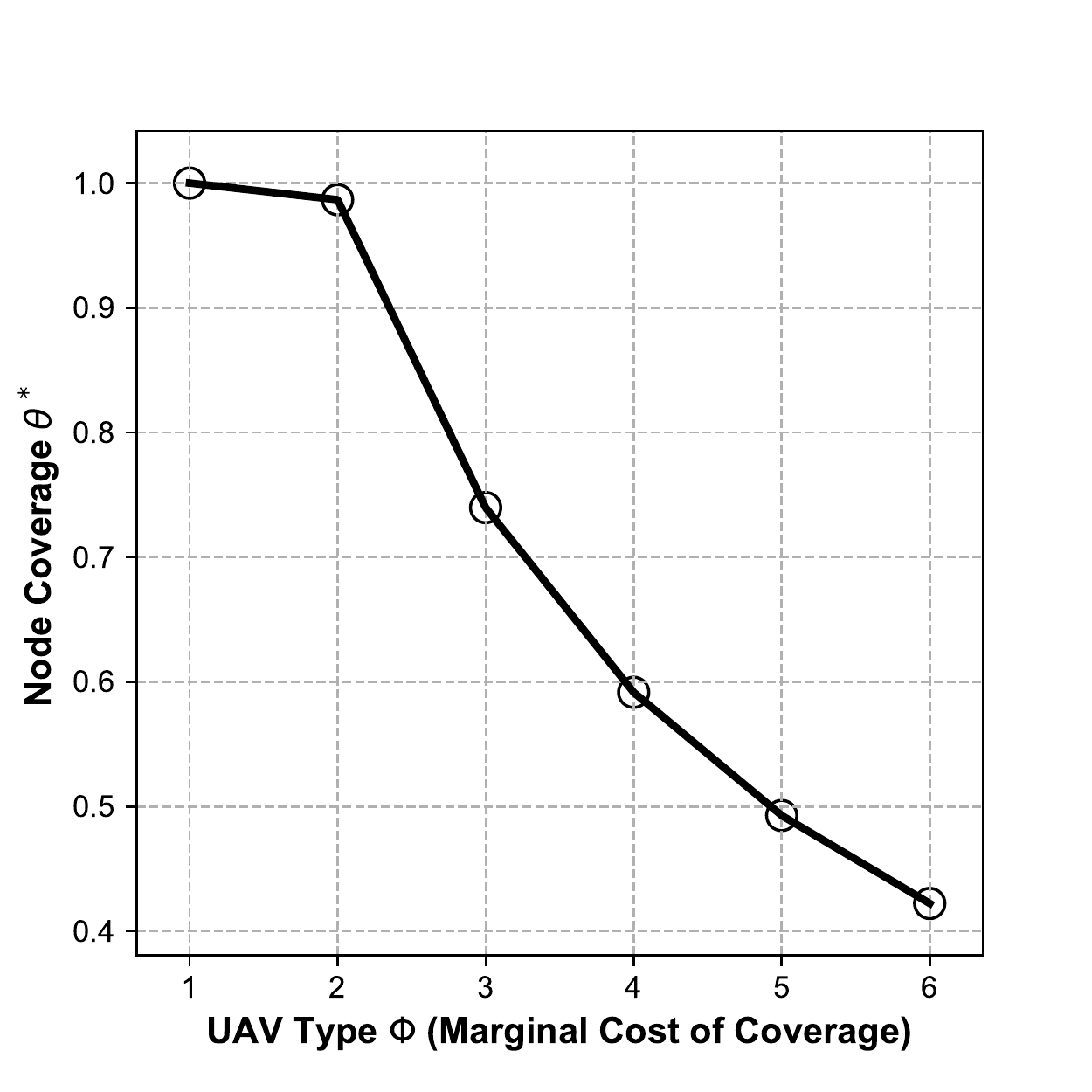}\par 
		\caption{UAV node coverage vs. auxiliary types}
	\label{fig:nodecoverage}
	\includegraphics[width=\columnwidth]{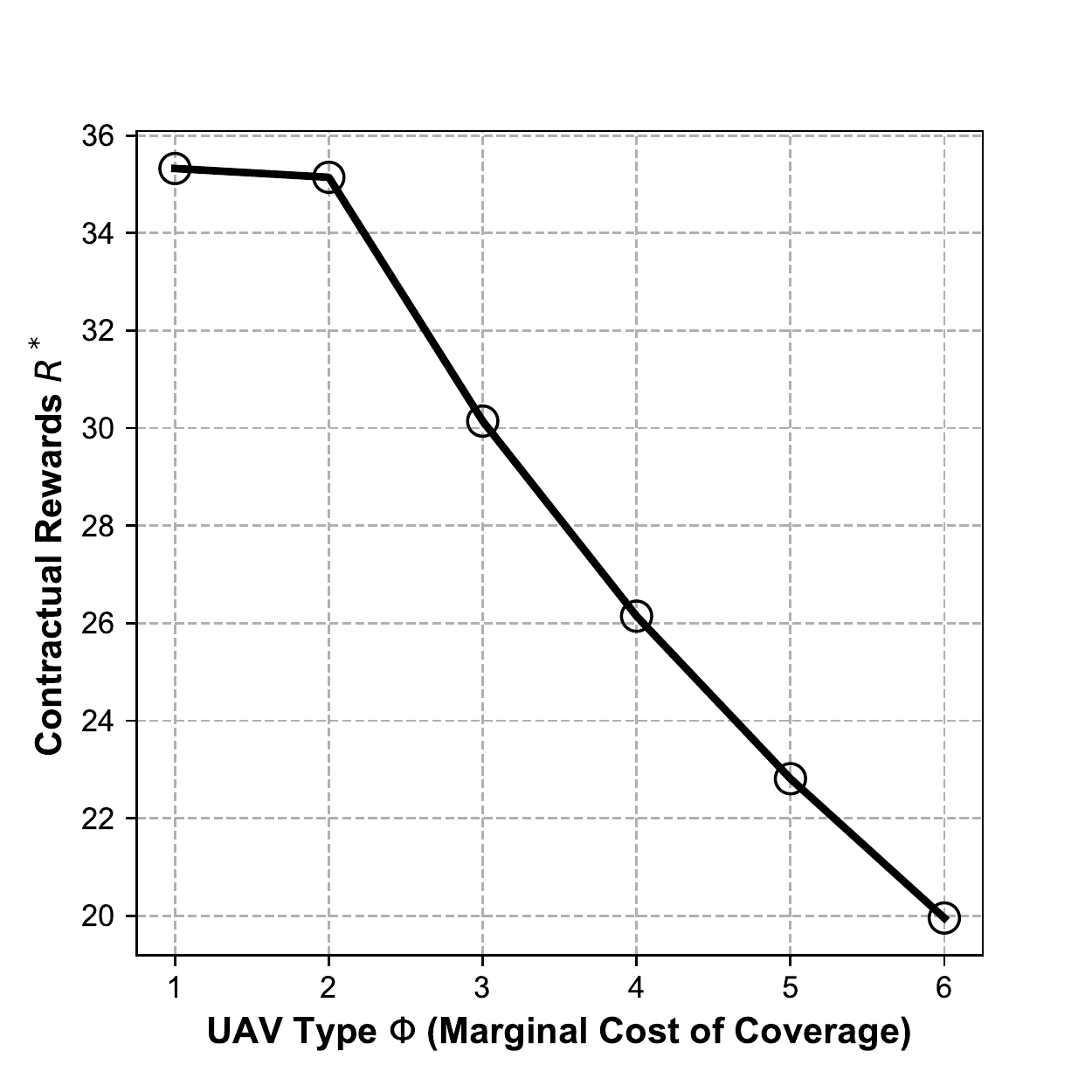}\par 
    \caption{Contract rewards vs. auxiliary types.}
    \label{fig:contractrewards}
    \includegraphics[width=\columnwidth]{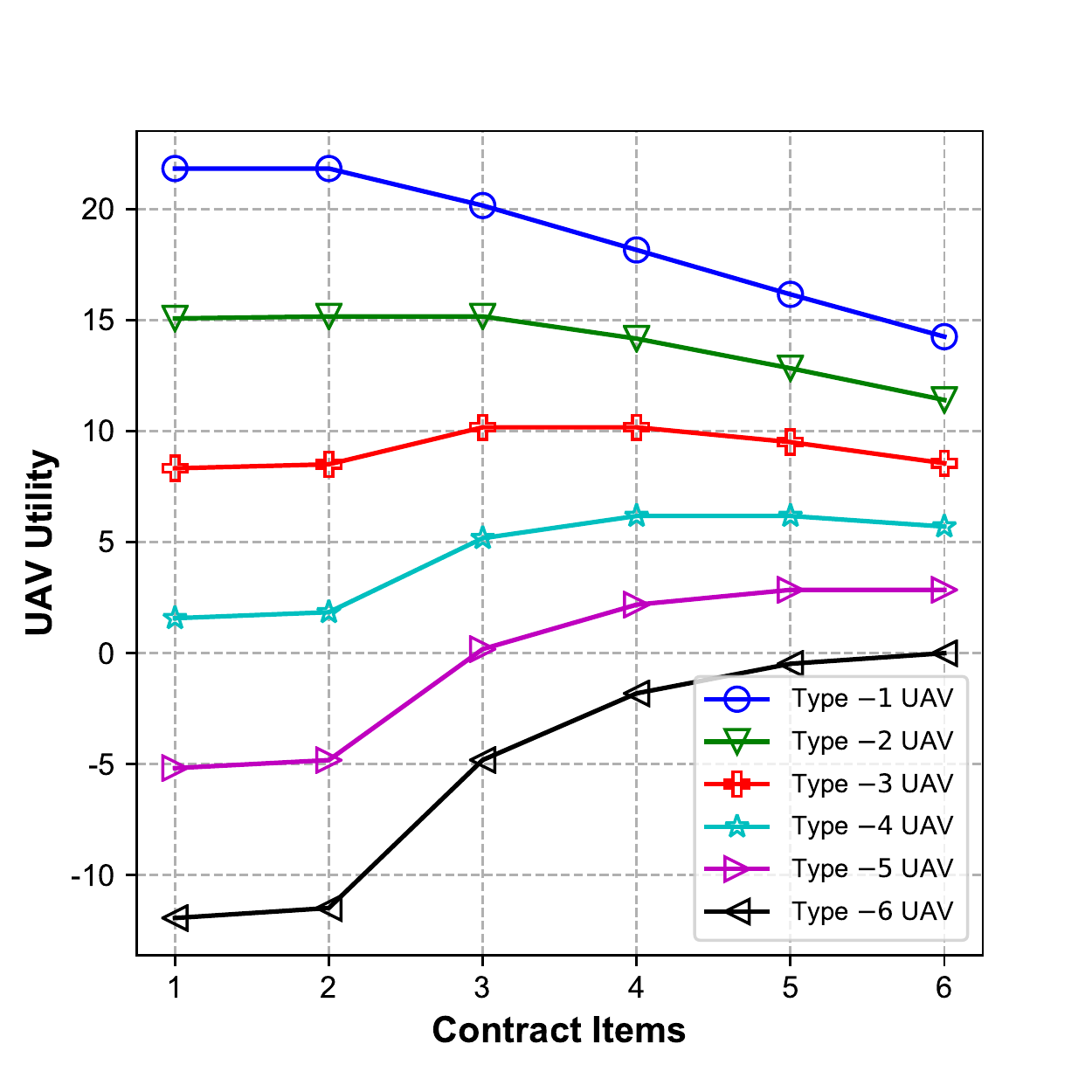}\par 
    \caption{Contract items vs. UAV utilities.}
    \label{fig:contractitems}
    
    \end{multicols}
	
\end{figure*}

\begin{figure}
\centering
	\includegraphics[width=0.6\columnwidth, height= 5cm]{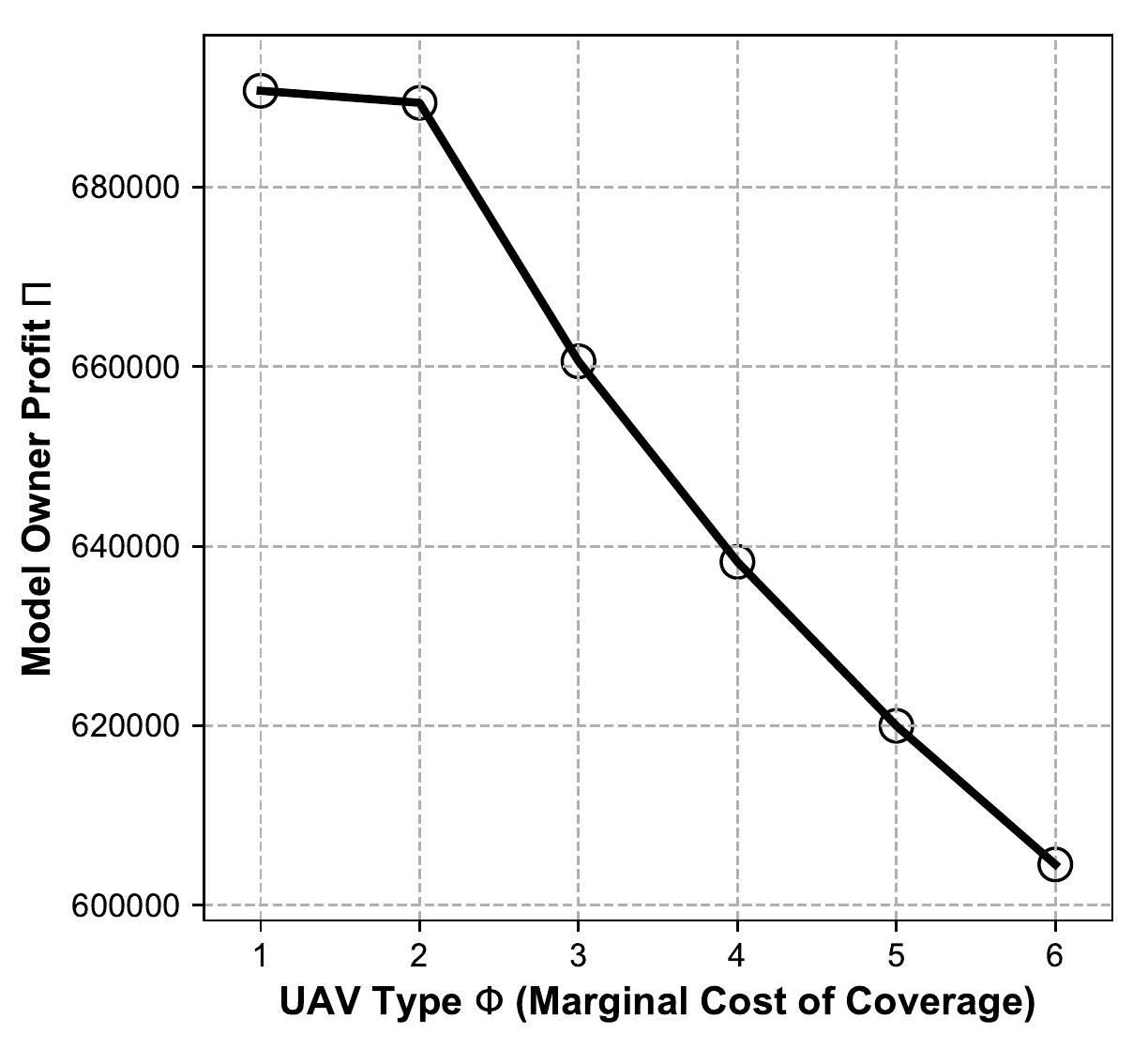}\par 
	\caption{The model owner profits vs. UAV auxiliary types.}
	\label{fig:contractprofit}
\end{figure}

\begin{figure}
\centering
	\includegraphics[width=0.8\columnwidth, height= 5cm]{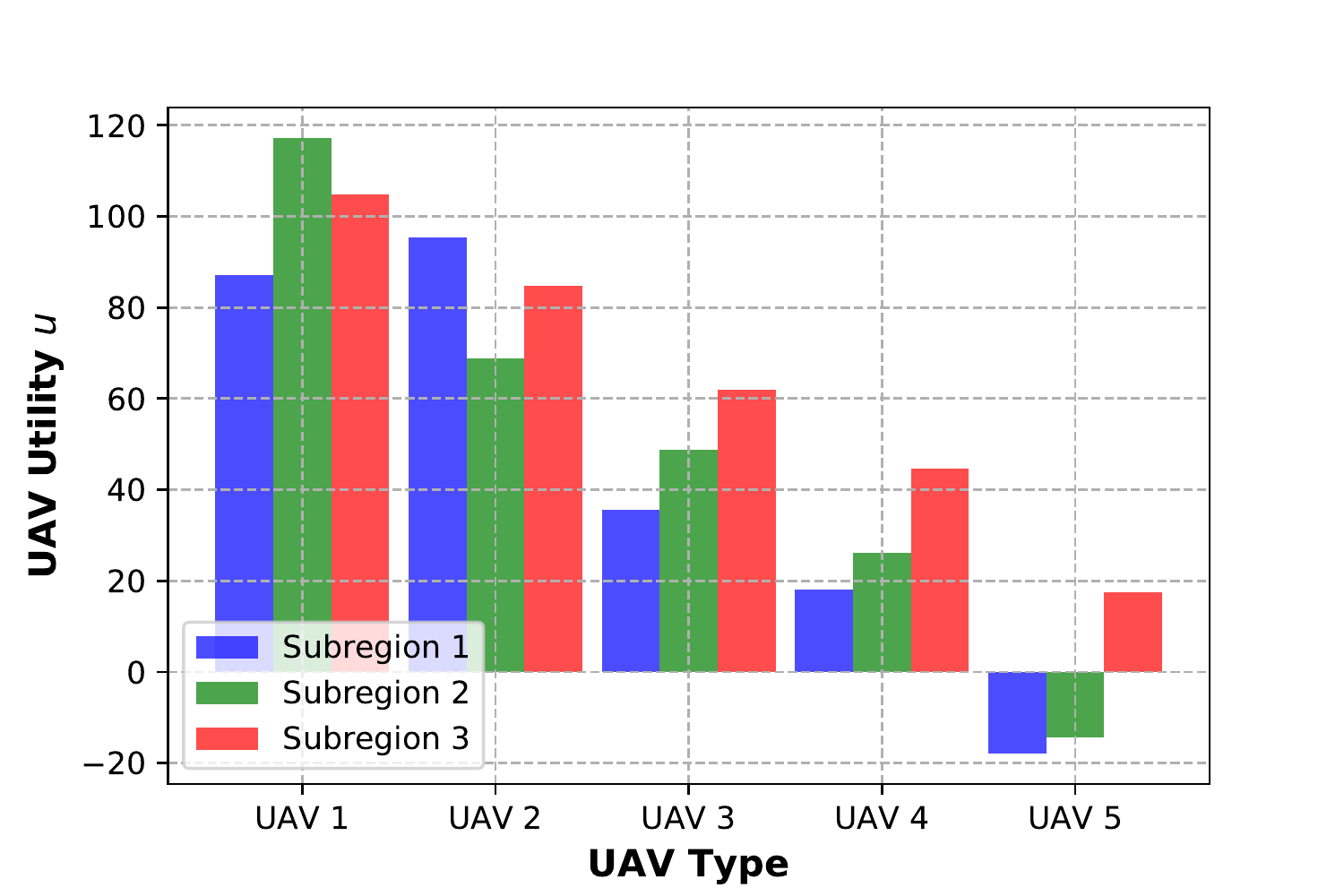}\par 
	\caption{The UAV utility for each subregion vs. types.}
	\label{fig:preferenceanalysis}
\end{figure}

\begin{figure*}
\centering
\begin{multicols}{3}

    \includegraphics[width=\columnwidth]{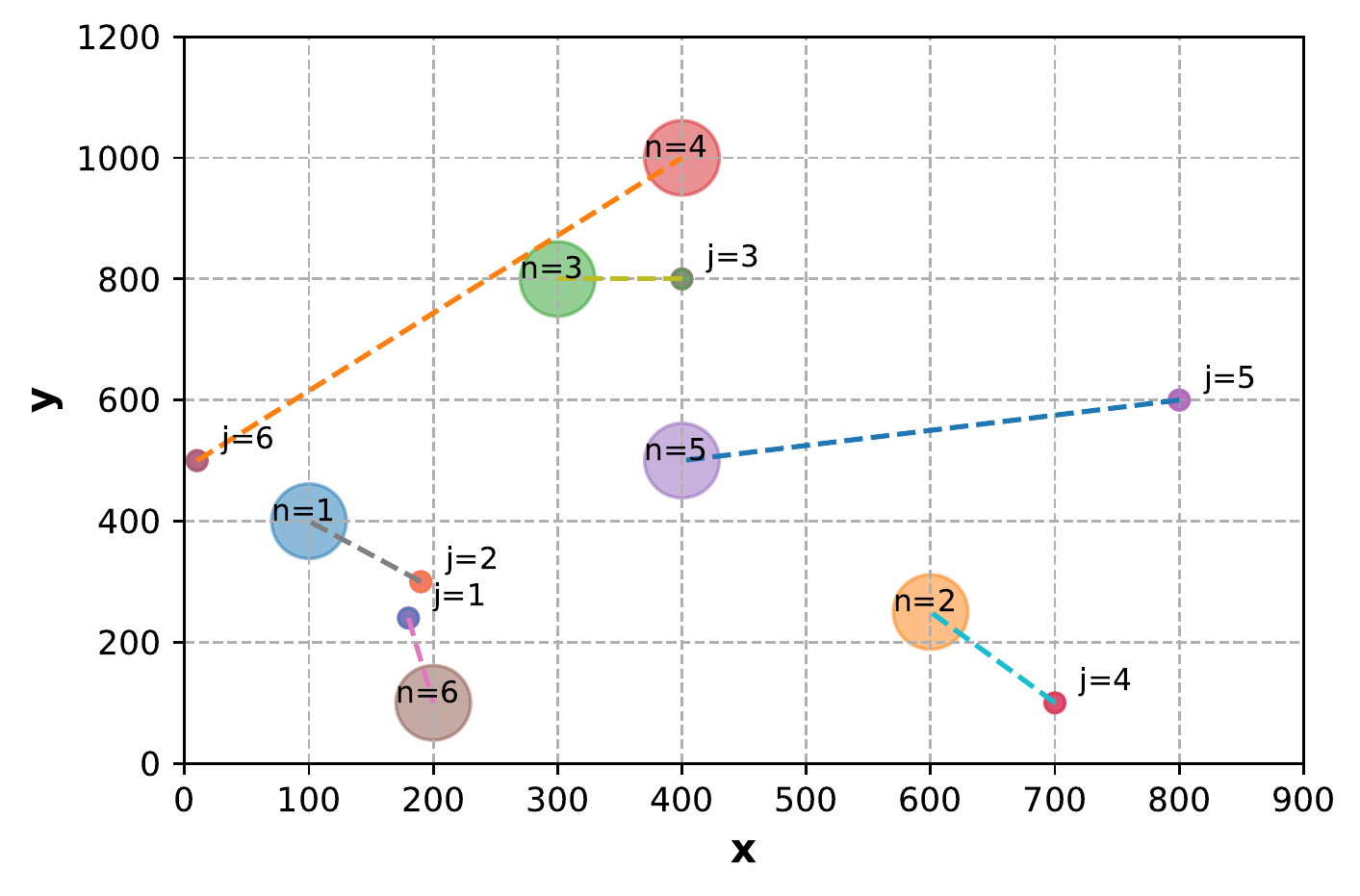}\par 
		\caption{UAV matching for homogeneous subregions.}
	\label{fig:match1}
	\includegraphics[width=\columnwidth]{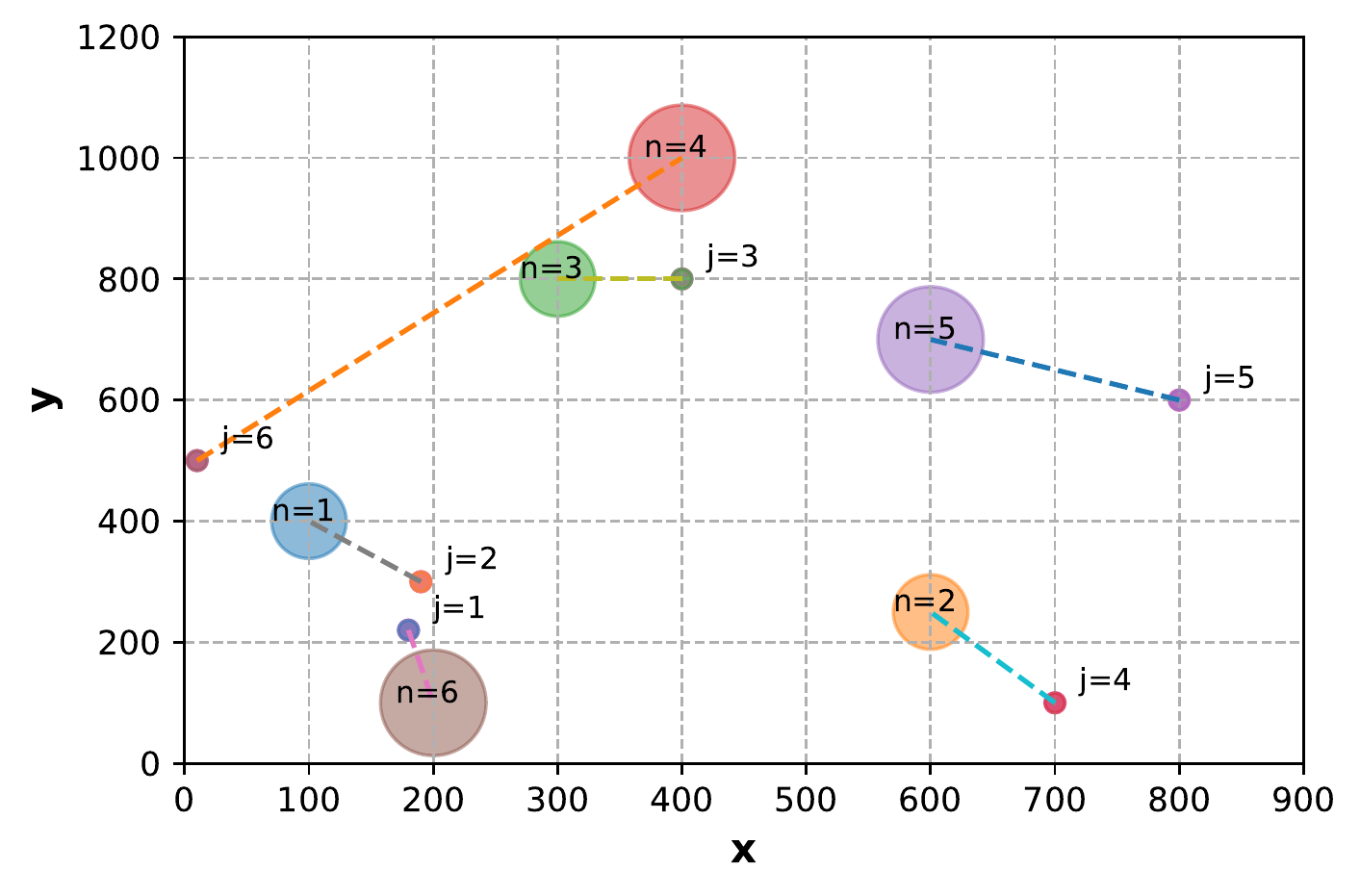}\par 
    \caption{UAV matching for subregions with different data quantities and coverage area (indicated by size of circle).}
    \label{fig:match2}
    \includegraphics[width=\columnwidth]{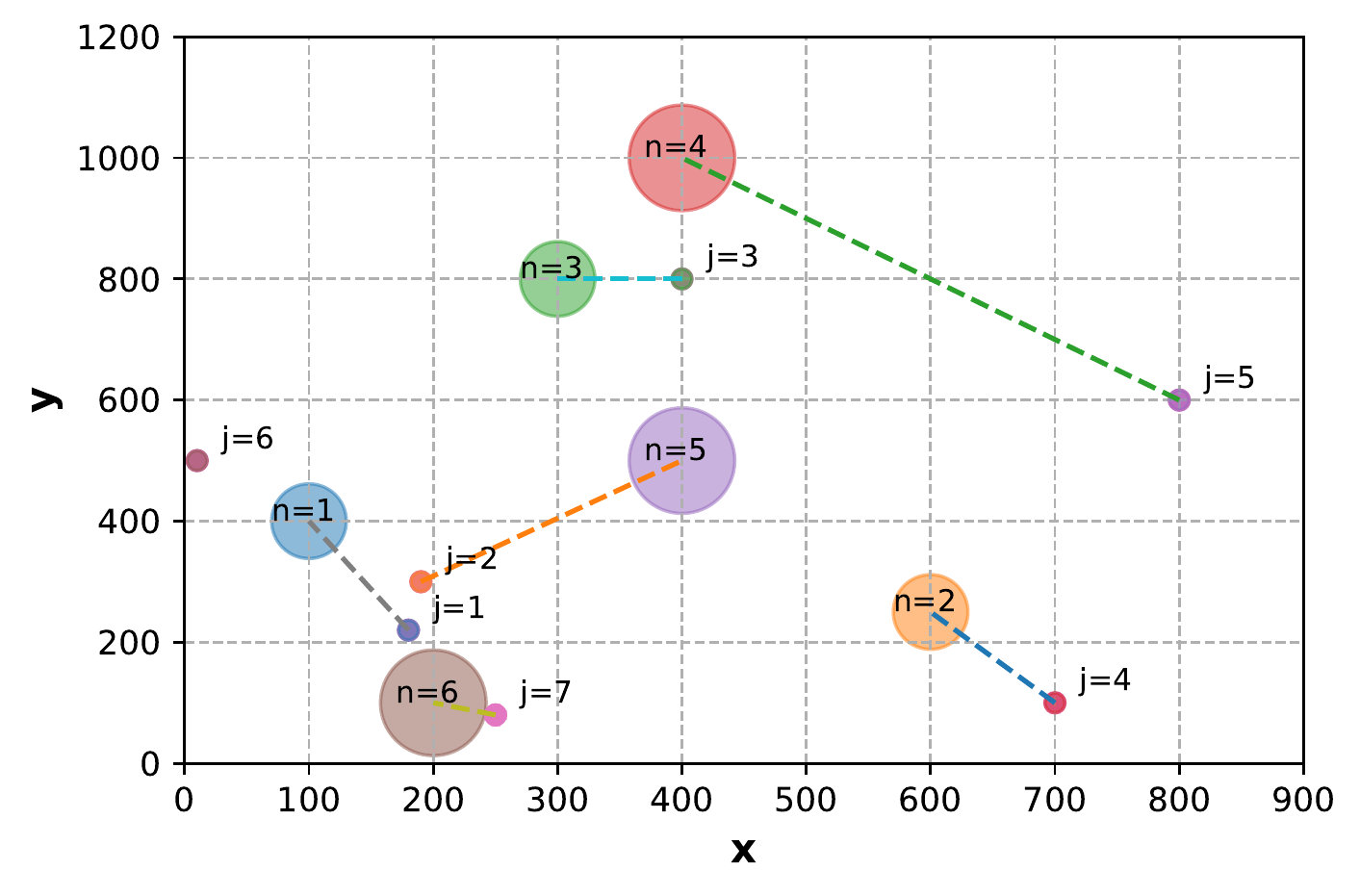}\par 
    \caption{UAV matching where $J>N$.}
    \label{fig:match3}
    
    \end{multicols}
	
\end{figure*}

In this section, we consider the optimality of our devised contract. To illustrate the contract optimality, we first consider the case of six UAVs and a single subregion. Then, we study a single iteration of matching between $5$ UAVs and $3$ subregions. Finally, we study the GS matching-based UAV-subregion assignment that involves up to $7$ UAVs and $6$ subregions. Unless otherwise stated, the list of value ranges for the key simulation parameters are as summarized in Table \ref{tableparameters}. The key parameters we use are with reference to studies involving UAV and FL optimization \cite{zeng2019energy,yang2019energy}.

\begin{table}[]

\centering
\caption{Table of Key Simulation Parameters}
\begin{tabular}{|l|l|}
\hline
\multicolumn{1}{|c|}{\textbf{Simulation Parameters}} & \multicolumn{1}{c|}{\textbf{Value}} \\ \hline
\multicolumn{2}{|c|}{\textit{ UAV Sensing and Traversal Parameters}}                                                          \\ \hline
                             $p$  &        10 - 35        \\ \hline
                             $v$  &         10 - 20    m/s  \\ \hline
                            $l^n$  &         1000 - 2000   m    \\ \hline
                            $l^n_{C_j}$  &         500 - 1000   m    \\ \hline
\multicolumn{2}{|c|}{\textit{UAV Computation Parameters}}                                           \\ \hline
                            
                             $L$  &        4       \\ \hline
                              $\varepsilon$  &         $\frac{1}{3}$       \\ \hline
                              $\delta$  &         $\frac{1}{4}$       \\ \hline
                              $\gamma$  &         $2$       \\ \hline
                              $A^*$  &         $0.6$       \\ \hline
                              $\varepsilon$  &         $\frac{1}{3}$       \\ \hline
                              $C$  &         $10 - 30 $    cycles/bit    \\ \hline
                              $K$  &         $24$       \\ \hline
                              $V$  &         $4$       \\ \hline
                                 $\kappa$  &         $10^{-28}$       \\ \hline
                                  $f$  &     $ 2  $       GHz         \\ \hline
\multicolumn{2}{|c|}{\textit{UAV Transmission Parameters}}                                           \\ \hline
                                 $D^n$  &     $ 500 - 1000  $       MB         \\ \hline
                                 $\lambda^n$  &     $ 10000 - 15000  $              \\ \hline
                                  $H$  &     $ 1  $       MB         \\ \hline
                                    $\rho$  &     $ 8 - 18  $               \\ \hline
 \multicolumn{2}{|c|}{\textit{UAV and Model Owner Utility Parameters}}                                           \\ \hline
                                 $\phi$  &     $ 0.05  $                \\ \hline
                                  $\mu$  &     $ 1  $                \\ \hline
                                   $\sigma$  &     $ 100000  $                \\ \hline

\end{tabular}
\label{tableparameters}
\end{table}

\subsection{Contract Optimality}

To illustrate the optimality of our multi-dimensional contract design, we first consider a highly simplified and demonstrative case of a single subregion and six UAVs of ascending marginal cost of route coverage. The values of $\alpha$ for the UAVs lie in the range of $[ 250,875]$, with increments of $125$, whereas the $\beta$ values lie in the range of $[20,70]$, with increments of $10$. The values are varied as presented in Table \ref{tableparameters}. Then, the auxiliary types are derived following (\ref{eqn:auxiliary}), and arranged in an ascending order for contract derivation. Type-$1$ UAV has the lowest marginal cost of node coverage, whereas  type-$6$ UAV has the highest marginal cost of node coverage. To focus our study on the optimality of our contract design, we hold the traversal and transmission cost types of the UAV constant for now. In addition, we assume that all the UAVs can complete the task within the time constraints.

Fig. \ref{fig:nodecoverage} and Fig. \ref{fig:contractrewards} consider the hypothetical scenarios in which each particular UAV type takes turn to be matched to the subregion. As an illustration, if UAV type-$1$ has been matched to serve the subregion, the optimal node coverage is $1$, whereas the contractual rewards is $35$. In contrast, if UAV type-$6$ is matched, the optimal node coverage is close to $0.4$, whereas the rewards is $20$. From Fig. \ref{fig:nodecoverage} and Fig. \ref{fig:contractrewards}, we can observe that the monotonicity condition discussed in Theorem \ref{monotonicitycondition} holds. In other words, the higher is the marginal cost of node coverage, the lower is the optimal node coverage and contract rewards. 

Fig. \ref{fig:contractitems} shows that the IC constraints of the contract holds. As an illustration, we consider the type-6 UAV, i.e., the UAV with the maximum marginal cost of node coverage. The type-6 UAV derives negative utility if it misreports its type, i.e., to imitate any other lower marginal cost UAV types $1-5$. As discussed in Definition (\ref{def:ic}), each UAV derives the highest utility only if it reports its type truthfully to the model owner. This validates the self-revealing mechanism of our contract. 

Fig. \ref{fig:contractprofit} shows that the model owner profits are the highest when it is able to be matched with the UAV of the lowest marginal cost of node coverage. This validates our discussion in Section \ref{sec:contractoptimality}, and confirms the model owner preference. In other words, among UAVs that are able to complete the task, the model owner prefers the UAV that incurs the lowest cost.

\subsection{UAV-Subregion Preference Analysis}

To analyze the preferences of the UAVs and subregions before we proceed with matching, we consider $5$ UAV types and $2$ subregions. In particular, the auxiliary types of the UAVs are shown in Table \ref{parameters2}. Similarly, the types are derived from the calibration of the parameters listed in Table \ref{tableparameters}. The UAVs are sorted in the ascending order based on marginal cost of node coverage. Besides, we consider three subregions $1,2,3$ of coordinates $(1000,1000)$, $(50,50)$, $(500,500)$. The subregion preference for each UAV is also presented in the last column of Table \ref{parameters2}.

\begin{table}[]
\caption{UAV Types For Preference Analysis.}
\label{parameters2}
\centering
\begin{tabular}{@{}llllll@{}}
\toprule
 UAV  Type   & Coordinates & $\alpha$ & $\beta$  & Subregion Preference & \\ \midrule
 $1$ & $(100,100)$    & $500$      & $20$      & $(2,3,1)$ \\
 $2$ &  $(900,900)$    & $500$    &  $20$     & $(1,3,2)$\\
 $3$ &   $(400,400)$   &   $750$   & $30$    &$(3,2,1)$ \\
 $4$ &   $(450,450)$   &  $750$    &    $30$  &$(3,2,1)$ \\
 $5$ &   $(500,500)$   &  $1000$    &     $40$  &$(3)$\\ \bottomrule
\end{tabular}
\end{table}

Following our discussion in Section \ref{sec:contractoptimality}, the subregions prefer the UAV types with lower marginal cost. Naturally, the preferences for all three subregions are similar as follows: $(1,2,3,4,5)$. Note that the subregions are indifferent between types $1$ and $2$, as well as types $4$ and $5$ given that the pairs have the same marginal cost of node coverage. 

To consider the UAV preferences, we plot the potential profits that each UAV may gain from covering the different subregions in Fig. \ref{fig:preferenceanalysis}. Note that this profit is a hypothetical one in some cases, since the profits can only be realized \textit{if} the UAV has been matched to cover the subregion. However, given that the UAV is not aware if it will be matched to the subregion apriori, the preference list of the UAV can only be constructed with the assumption that it is indeed matched to the subregion. We note that UAV $1$ prefers subregion $2$, whereas UAV $2$ prefers subregion $1$ and so on. Intuitively, the preference for subregions relies on the traversal costs, i.e., the cost of traveling to and from the subregion. As such, the preferences for the UAVs $1$ and $2$ are $(2,3,1)$ and $(1,3,2)$ respectively. 

On the other hand, the UAV $5$ prefers only the closest region $3$, given the potential negative profits derived if it serves the other two subregions, as a result of the high marginal costs incurred for task completion. As such, we are able to derive the matching of (Region $2$, UAV $1$) and (Region $1$, UAV $2$) given that the UAV-subregion preferences match perfectly.

The consideration for subregion $3$ is clearly more challenging than that of $1$ and $2$ given that the subregion is indifferent between the two remaining UAVs $3$ and $4$, and that the UAVs also rank the subregion highest, in terms of preference. To that end, we consider the rewards calibration rule proposed in Definition \ref{rewardscalibration}. The contract rewards are calibrated downwards till a UAV emerges as the only choice left. In this case, after the downward calibration of rewards $\tilde{R}$, UAV $4$ will clearly be matched with subregion $3$, given its close proximity to the subregion.

Through this relatively straightforward example, we are able to derive an insight, i.e., a successful match will have the lowest marginal cost type UAVs matched to the subregion that it is situated closest to. Clearly, Fig. \ref{fig:preferenceanalysis} also validates the efficiency of our incentive mechanism design, i.e., the best available UAV is matched to the respective subregion.

\subsection{Matching-Based UAV-Subregion Assignment}

In this section, we consider the matching-based UAV-subregion assignment. In particular, we consider three scenarios to illustrate the matching-based assignment.

In the \textit{first} scenario, six UAVs of ascending marginal cost types are initialized to choose among six subregions that are of varying distances from each UAV. Each of the subregions is calibrated to hold the same quantities of data ($D^n$) and sensing area ($l^n$) for coverage. For ease of exposition, the UAVs are all able to complete their tasks within their energy capacities and stipulated time constraint. The coordinates of the subregions and UAVs, as well as the matching outcomes, are presented in Fig. \ref{fig:match1}. The preference list of the UAVs are presented in Table~\ref{uavpref}. Note that the preference list of each subregion is simply $(1,2,3,4,5,6)$, i.e., among all feasible UAVs that can cover the subregion within the time and energy constraints, the UAV with the lowest marginal cost is preferred.

From Fig. \ref{fig:match1}, we observe that the UAV $1$ is matched to its most preferred subregion $6$. Though UAV $2$ also prefers subregion $6$, it is unable to be matched to the subregion given that UAV $1$ is higher up on the list of preferences of subregion $6$. As such, UAV $2$ is matched to its second choice. Naturally, the matching between UAV $3$ and subregion $3$, UAV $4$ and subregion $2$, as well as UAV $5$ and subregion $5$ is intuitive, given the unavailability of the other more preferred UAVs for the subregions to match with. We observe that UAV $6$ is finally matched with its fifth choice, given that the UAV $6$ has the lowest priority among subregions.

In the \textit{second} scenario, we consider the same UAV types but with heterogeneous subregions of different data quantities and sensing areas for coverage. As was expected, the sizes of the regions do not affect the matching outcomes and the matching remains the same (Fig. \ref{fig:match2}). This is given that the preference rankings of the subregion and the UAV remain constant. While the varying values of $D^n$ and $l^n$ affects the \textit{magnitude} of UAV types, the \textit{ordering} of the UAV types, and thus their preferences, is retained. This is important to ensure that the monotonicity of our contract design holds across subregions, so as to preserve the contract optimality.

In the \textit{third} scenario, we consider the case where $J>N$, i.e., the number of UAVs exceed that of the number of subregions available. As an illustration, we add in the UAV $7$, which has the lowest marginal cost of node coverage relative to that of the other six available UAVs from the aforementioned scenarios. We observe from Fig. \ref{fig:match3} that the matching outcomes have changed. UAV $7$  is now matched with its most preferred subregion, i.e., subregion $6$, in place of UAV $1$. Naturally, this affects the assignment for the other UAVs. For example, UAV $1$ has to be matched to its second choice now, whereas UAV $2$ has to be matched to its third choice. We observe that the UAV of the largest type, i.e., UAV $6$ is left out of the assignment as a result.

The simulation results allow us to validate the efficiency of our mechanism design. Firstly, the contract design ensures truthful type reporting and the incentive compatibility of our contract is validated. Secondly, with consideration of the preferences, the available UAV with the lowest marginal cost of node coverage is matched to the subregion. This ensures the profit maximization of the model owner.

\begin{table}[]
\caption{UAV Type and Preference for Subregions.}
\label{uavpref}
\centering
\begin{tabular}{@{}llllll@{}}
\toprule
 UAV  Type   & Subregion Preference    \\ \midrule
 $1$ & $(6,1,5,2,3,4)$          \\
 $2$ &  $(6,1,5,3,2,4)$        \\
 $3$ &   $(3,4,5,1,2,6)$        \\
 $4$ &   $(2,5,6,1,3,4)$     \\
 $5$ &   $(2,5,3,4,1,6)$     \\
 $6$ &   $(1,5,3,6,4,2)$     \\  \bottomrule
\end{tabular}
\end{table}

\section{conclusion}
\label{sec:conclude}

In this paper, we have considered an FL based sensing and collaborative learning scheme involving UAVs for applications in the IoV paradigm. Given the incentive mismatches between the UAVs and the model owners, we have  proposed a multi-dimensional contract-matching incentive design such that the  UAV with the lowest marginal cost of node coverage is assigned to each subregion for task completion. For future works, we may consider the adoption of wireless charging techniques with energy harvesting \cite{lu2018wireless} such that the UAVs can perform sensing and model training simultaneously, without the need to return to their bases. In that case, the incentive mechanism design will involve the considerations of one more player type, i.e., the wireless charging service provider.

\bibliographystyle{IEEEtran}
\bibliography{fl-uav}

\end{document}